\definecolor{mygreen}{rgb}{0,.5,0}
\newtheorem{theorem}{Theorem}[section]
\newtheorem{proposition}{Proposition}[section]
\newtheorem{lemma}{Lemma}[section]
\newtheorem{corollary}{Corollary}[section]
\newtheorem{example}{Example}[section]
\newcommand{\cF}{{\cal F}}
\newcommand{\cA}{{\cal A}}
\newcommand{\cX}{{\cal X}}
\newcommand{\prob}{\mathbb{P}}
\newcommand{\expect}{\mathbb{E}}
\newcommand{\var}{V@R}
\newcommand{\hl}{\em \color{red}}
\begin{document}

\title{Surplus-Invariant, Law-Invariant, and Conic Acceptance Sets Must be the Sets Induced by Value-at-Risk\thanks{The paper was previously entitled ``Surplus-Invariant, Law-Invariant, and Positively Homogeneous Acceptance Sets Must be Induced by Value-at-Risk." The authors are grateful to the editor and the anonymous referees for their insightful comments that help to improve the paper. They are also thankful to Steven Kou for his constructive comments on the paper and for his host of the authors' visit to the Risk Management Institute at the National University of Singapore where the paper started. Xue Dong He is supported by the General Research Fund of the Research Grants Council of Hong Kong SAR (Project No. 14225916).}}

\author{Xue Dong He\thanks{Department of Systems Engineering and Engineering Management, The Chinese University of Hong Kong, Hong Kong, Email: xdhe@se.cuhk.edu.hk.} \and Xianhua Peng\thanks{HSBC Business School, Peking University, Nanshan District, Shenzhen, Guangdong, China, Email: xianhuapeng@phbs.pku.edu.cn.}}
\date{January 22, 2018}

\maketitle

\begin{abstract}
The regulator is interested in proposing a capital adequacy test by specifying an acceptance set for firms' capital positions at the end of a given period. This set needs to be surplus-invariant, i.e., not to depend on the surplus of firms' shareholders, because the test means to protect firms' liability holders. We prove that any surplus-invariant, law-invariant, and conic acceptance set must be the set of capital positions whose value-at-risk at a given level is less than zero. The result still holds if we replace conicity with num\'eraire-invariance, a property stipulating that whether a firm passes the test should not depend on the currency used to denominate its assets.\\



\emph{Keywords}: capital adequacy tests; value-at-risk; surplus-invariance; conicity; positive homogeneity; numeraire-invariance

\emph{JEL classification}: D81, G18, G28, K23

\end{abstract}

\baselineskip 18pt


\section{Introduction}

Suppose a firm has initial capital $c$, debt $d$, and thus asset value $c+d$ at time 0. Suppose the firm invests its assets in a portfolio that generates net random return $R$ in the period $[0, T]$. Suppose the firm needs to pay the interest $rd$ together with the face value $d$ to the creditor at time $T$, where $r$ is the interest rate. Then, the profit \& loss (P\&L) of the firm at the end of the period is $Y=(c+d)R-rd$. The Basel II Accord \citep{Basel06, Basel09} proposes a capital adequacy test based on the P\&L: the value-at-risk (VaR) of the P\&L at some confidence level, e.g., at 99\%, is calculated and then the firm is required to hold as much capital as the amount of the VaR. Thus, if we denote $\var_\alpha(Y)$ as the $\alpha$-level VaR of $Y$, then 
the firm's position is acceptable if and only if $\var_\alpha(Y)\le c$, i.e., if and only if $\var_\alpha(X)\le 0$, where $X:=Y+c= (c+d)(R+1)-(1+r)d$ is the capital position of the firm, i.e., the firm's assets net of its liability, at the end of the period.

As in the Basel Accords, the regulator is interested in proposing a capital adequacy test by specifying an acceptance set for firms' capital positions at the end of a given period. If a firm does not pass the test, it can 1) adjust its current portfolio, 2) inject new capital and hold it as cash, and 3) inject new capital and invest it in a portfolio.
Then, the regulator can check if the updated capital position is acceptable by performing the capital adequacy test again.

The positive part of a firm's capital position is the surplus for the equity holders and the negative part is defined as the {\em option to default} of the firm, which represents the portion of liabilities that is not paid off using the firm's assets. As argued by \citet{Staum2013:ExcessInvariance}, \citet{ContDeguestHe2013:LossBasedRiskMeasures}, \citet{KochMedinaEtal2015:CapitalAdequacyTests}, and \citet{KochMedinaEtal2016:Diversification}, the main purpose of a capital adequacy test is to protect liability holders; hence, whether a 
firm's capital position lies in the acceptance set 
should depend on only the negative part of the firm's capital position but not the positive part, a property referred to as surplus-invariance. 

In the present paper, we prove that surplus-invariant, law-invariant, and conic acceptance sets must be the sets induced by VaR, i.e., must be the set of capital positions whose VaR at a given level is less than zero. We assume neither convexity nor coherence, so the family of acceptance sets under our investigation include many that are not convex, such as those induced by the natural risk statistics proposed by \citet{KouEtal2012:ExternalRiskMeasures} and the distortion risk measures proposed by \citet{KouPeng2016:OnTheMeasurement}. Therefore, our result is different from those in \citet{KochMedinaEtal2015:CapitalAdequacyTests}  and \citet{KochMedinaEtal2016:Diversification}.

Law-invariance is a commonly adopted property that allows the regulator to perform statistical tests on whether a firm uses a valid model in the capital adequacy test. Conicity, which means that scaling a capital position by any positive constant does not affect the acceptability of the position, is assumed in the notion of coherent risk measures \citep{ArtznerPDelbaenFEberJMHeathD:99crm}. 
Conicity
is implied by num\'eraire-invariance, a property that is introduced by \citet{ArtznerEtal2009:RiskMeasures} and further investigated by
\citet{KochMedinaEtal2016:Diversification}. We also prove that surplus-invariant, law-invariant, and num\'eraire-invariant acceptance sets must be the sets induced by VaR.

The remainder of the paper is as follows: Section \ref{se:Main} delivers the main results and Section \ref{se:Conclusions} concludes. All proofs are placed in the Appendix.

\section{Main Results}\label{se:Main}
\subsection{Acceptance Sets}
Consider a probability space $(\Omega,\cF,\prob)$ and denote $\mathcal{L}^{0}(\Omega,\cF,\prob)$ as the set of all proper random variables on this space. Let $\mathcal{X}$ be a subset of $\mathcal{L}^{0}(\Omega,\cF,\prob)$ that contains $\mathcal{L}^{\infty}(\Omega,\cF,\prob)$, the set of bounded random variables. Here, $\mathcal{X}$ can include unbounded random variables; for example,  $\mathcal{X}$ can be $\mathcal{L}^{\infty}(\Omega,\cF,\prob)$, $\mathcal{L}^{0}(\Omega,\cF,\prob)$, and $\mathcal{L}^{p}(\Omega,\cF,\prob)$ for some $p\in[1,+\infty)$ that represents the set of random variables with finite $L^p$-norm.
Each element $X$ in $\mathcal{X}$ represents the {\em capital position} of a firm, i.e., the firm's assets net of its liability, at the end of a given period. Thus, the positive part of $X$, denoted as $X^+:=\max(X,0)$, is the {\em surplus} of the firm's shareholders and the negative part of $X$, denoted as $X^-:=-\min(X,0)$, is the {\em option to default} of the firm. The shareholders of the firm take the surplus but do not pay the option to default due to limited liability, so the liability holders cannot take the surplus but have to pay the option to default.

The regulator proposes a capital adequacy test by specifying an {\em acceptance set} $\cA$. 
We make the standard assumption in the risk measure literature (see, e.g., \citet{ArtznerPDelbaenFEberJMHeathD:99crm}, \citet{Staum2013:ExcessInvariance}, and \cite{KochMedinaEtal2015:CapitalAdequacyTests}) that whether a firm passes the capital adequacy test only depends on the capital position of the firm. Hence, $\cA$ is a subset of $\mathcal{X}$: a firm passes the test if and only if its capital position lies in $\cA$. We introduce the following properties for the acceptance set $\cA$:
\begin{enumerate}
	\setlength{\itemindent}{1.5ex}
	\item[(i)] {\em Surplus-invariance}: for any $X,Y\in\mathcal{X}$, if $X\in \cA$ and $X^-\ge Y^-$ almost surely (a.s.), then $Y\in \cA$.
	\item[(ii)]{\em Law-invariance}: for any $X,Y\in\mathcal{X}$, if $X\in \cA$ and $Y$ has the same distribution as $X$, then $Y\in \cA$.
	\item[(iii)]{\em Conicity}: for any $X\in\mathcal{X}$ and $\lambda>0$ such that $\lambda X\in \mathcal{X}$, if $X\in \cA$, then $\lambda X\in \cA$.
	\item[(iv)]{\em Num\'eraire-invariance}: for any $X\in\mathcal{X}$ and any strictly positive random variable $Z$ on $(\Omega, \cF, \prob)$ such that $ZX\in \mathcal{X}$, if $X\in \cA$, then $ZX\in \cA$.
	\item[(v)]
	{\em Truncation-closedness}: for any $X\in\mathcal{X}$, if $\min(\max(-d, X),d)\in \cA$ for any $d>0$, then $X\in \cA$.
\end{enumerate}


The surplus-invariance property is proposed by \citet{KochMedinaEtal2015:CapitalAdequacyTests},\footnote{ \citet{KochMedinaEtal2015:CapitalAdequacyTests} use a slightly different, but equivalent, definition of surplus-invariance; see Eq. (1.1) in their paper and Proposition 1 in \citet{KochMedinaEtal2016:Diversification}.
} extending the excess-invariance 
property of the shortfall risk measures and the loss-dependence property of the loss-based risk measures proposed by \citet{Staum2013:ExcessInvariance} and \citet{ContDeguestHe2013:LossBasedRiskMeasures}, respectively.\footnote{A risk measure $\rho$ is excess invariant if $\rho(X)=\rho(Y)$ for all $X$ and $Y$ such that $X^-=Y^-$ and is loss dependent if $\rho(X)=\rho(\min(X, 0))$ for any $X$. 
}
The surplus-invariance property stipulates that if firm A passes the test, then firm B whose option to default is smaller than that of firm A should also pass the test. 
This property is satisfied by the acceptance set that is associated with VaR, i.e., $\mathcal{A}=\{X\in \cX\mid \var_{\alpha}(X)\leq 0\}$, and the one that is associated with a shortfall risk measure, i.e., $\mathcal{A}=\{X\in \cX\mid \expect[l(X^-)]\leq c\}$, where $l$ is a nonconstant and increasing function. 
See \cite{Staum2013:ExcessInvariance}  and \citet{KochMedinaEtal2016:Diversification} for more examples and discussion.

If an acceptance set $\cA$ satisfies the surplus-invariance property, then $\cA$ satisfies the property that for any $X, Y\in \mathcal{X}$, if $X\le Y$ a.s. and $X\in \cA$, then $Y\in \cA$. In fact, the latter property is used in the definition of acceptance sets in \citet{KochMedinaEtal2015:CapitalAdequacyTests} and \citet{KochMedinaEtal2016:Diversification}.


Law-invariance is important for the regulator to use historical data to backtest the models used by firms in conducting the capital adequacy test.
Conicity stipulates that scaling the capital position of a firm by a positive constant does not change the acceptability of the firm. This property is assumed in coherent risk measures \citep{ArtznerPDelbaenFEberJMHeathD:99crm}, but is not required in convex risk measures \citep{FollmerHSchiedA:02convexriskmeasure,FrittelliMEmanuelaRG:02riskmeasure}. \cite{Weber2013} show that coherent risk measures applied to systems that exhibit price impact may induce convex risk measures that are not conic.

Num\'eraire-invariance
is introduced
in  \citet{ArtznerEtal2009:RiskMeasures} and further investigated in  \cite{KochMedinaEtal2016:Diversification}.
It means that
``the acceptance set should not
depend on the choice of the eligible (num\'eraire) asset" \citep[][p. 114]{ArtznerEtal2009:RiskMeasures}.
Num\'{e}raire-invariance can accommodate the situation that the assets and liabilities of firms may be denominated in different currencies.
\citet[][Section 6]{KochMedinaMunari2016:UnexpectedShortfalls} show that the acceptance set associated with 
VaR is num\'eraire-invariant but that associated with 
expected shortfall is not.
Clearly, num\'eraire-invariance implies conicity.\footnote{\citet[Proposition 5]{KochMedinaEtal2016:Diversification} show that a {\em closed} acceptance set $\cA$ is num\'eraire-invariant and monotone (i.e., for any $X$ and $Y$, if $X \leq Y$ a.s. and $X\in \cA$, then $Y \in \cA$) if and only if it is surplus-invariant and conic.} 
Without assuming num\'eraire-invariance, the currency risk can also be explicitly incorporated in capital adequacy tests by using vector-valued risk measures; see e.g., \cite{Jouini2004}.


The truncation-closedness property simply means that if any truncated version of the (possibly unbounded) capital position $X$ is acceptable, then $X$ itself is also acceptable. This property is similar to
the continuity axiom for the distortion risk measure $\rho$, which postulates that $\rho$ satisfies $\lim_{d\to +\infty}\rho(\min(d,X))=\lim_{d\to +\infty}\rho(\max(-d,X))=\rho(X)$ \citep{Wang1997:InsuranceMeasures}. The  truncation-closedness property automatically holds for any acceptance set if $\mathcal{X}=\mathcal{L}^{\infty}(\Omega, \cF, \prob)$ and is satisfied by some commonly used acceptance sets. For example,
for any distortion risk measure $\rho_h(X):=\int_0^{\infty}h(\prob(X>x))dx+\int_{-\infty}^0 (h(\prob(X>x))-1)dx$, where $h$ is a distortion function ($h$ is increasing with $h(0)=0$ and $h(1)=1$), $\mathcal{A}_{\rho_h}=\{X\in \cX\mid \rho_h(X)\leq 0\}$ is truncation-closed because $\lim_{d\to\infty}\rho_h(\min(\max(-d, X), d))=\rho_h(X)$.

\subsection{Value-at-Risk}

The current practice of capital adequacy tests in the Basel II Accord and the Solvency II is to use VaR; the Basel III Accord imposes that starting from January 1, 2018,  the expected shortfall at 97.5\% level under stressed scenarios will be used in the test of capital adequacy for market risk \citep{Basel2016}. To define VaR formally, we introduce some notations. For a given random variable $X$, denote $F_X$ as its (right-continuous) cumulative distribution function. Denote $F_X^{-1}$ as the left-continuous quantile function of $X$, i.e.,
$$F_X^{-1}(t) :=\sup\{x\in \mathbb R\mid F_X(x)<t\}=\inf\{x\in \mathbb R\mid F_X(x)\geq t\},\; t\in[0,1].$$
In particular, $F_X^{-1}(0)=-\infty$ and $F_X^{-1}(1)=\text{esssup}\, X:=\inf\{x\mid \prob(X\leq x)=1\}$ by definition. Furthermore,
$F_X^{-1}(t+)=\inf\{x\in \mathbb R\mid F_X(x)> t\}$ is the right-continuous quantile function of $X$  \citep[][Lemma A.19]{FollmerHSchied-4th}; in particular, $F_{X}^{-1}(0+)=\lim_{z\downarrow 0}F_X^{-1}(z)=\text{essinf}\, X:=\sup\{x\mid \prob(X\leq x)=0\}$, and $F_{X}^{-1}(1+)=+\infty$. It is well known that for any $x\in\mathbb{R}$ and $t\in[0,1]$, (i) $F_X(x)< t$ if and only if $F_X^{-1}(t)> x$ and (ii) $F_X(x-)\le t$ if and only if $F_X^{-1}(t+)\ge x$, where $F_X(x-):=\lim_{z\uparrow x}F_X(z)$.  
It is easy to show that\footnote{In fact, by Lemma A.27 in \citet{FollmerHSchied-4th}, $F_{\varphi(X)}^{-1}(z)=\varphi\big(F_X^{-1}(z)\big)$, for a.e. $z\in(0,1)$. Since both $F_{\varphi(X)}^{-1}(z)$ and $\varphi\big(F_X^{-1}(z)\big)$ are left-continuous, it follows that $F_{\varphi(X)}^{-1}(z)=\varphi\big(F_X^{-1}(z)\big)$, for any $z\in(0,1)$. Similar argument leads to $F_{\psi(X)}^{-1}(z)=\psi\big(F_X^{-1}((1-z)+)\big), \forall z\in (0,1)$.\label{footnote: argument_transform_quantile}} for any continuous, increasing function $\varphi: \mathbb{R}\to \mathbb{R}$, $F_{\varphi(X)}^{-1}(z)=\varphi\big(F_X^{-1}(z)\big),z\in(0,1)$ and for any continuous, decreasing function $\psi: \mathbb{R}\to \mathbb{R}$, $F_{\psi(X)}^{-1}(z)=\psi\big(F_X^{-1}((1-z)+)\big), z\in (0,1)$.

VaR of $X$ at level $\alpha$ is defined as the quantile function of $-X$ at level $\alpha$. Formally, for any $\alpha\in [0,1]$, we define the {\em lower $\alpha$-VaR} and {\em upper $\alpha$-VaR} of $X$ respectively to be
{\allowdisplaybreaks
	\begin{align}
	\var^{l}_\alpha(X):&=F^{-1}_{-X}(\alpha) = -F^{-1}_X((1-\alpha)+),\label{equ:var_l}\\
	\var^{u}_\alpha(X):&=F^{-1}_{-X}(\alpha+)=-F^{-1}_X(1-\alpha).\label{equ:var_u}
	\end{align}}%

For each $\alpha\in[0,1]$, we define the following three acceptance sets based on the default probability of the firm:
{\allowdisplaybreaks
	\begin{align}
	\cA^{-}_\alpha &:= \{X\in\mathcal{X} \mid \prob(X<0)<1-\alpha \},\label{equ:a_minus_alt}\\
	\cA^{0}_\alpha &:=
	\{X\in\mathcal{X} \mid \prob(X\le -\epsilon)<1- \alpha\text{ for any }\epsilon>0\},\label{equ:a_zero_alt}\\
	\cA^+_\alpha& := \{X\in\mathcal{X} \mid \prob(X<0)\le 1-\alpha \}.\label{equ:a_plus_alt}
	\end{align}}%

In the Basel II Accord, a capital position $X$ is acceptable if and only if its VaR at certain level $\alpha$ (e.g., 99\%) is less than or equal to zero. The following Proposition \ref{prop:VaRAcceptanceSets} shows that $\cA_{\alpha}^-$, $\cA_{\alpha}^0$, and $\cA_{\alpha}^+$ are indeed induced by VaR.
Moreover, these acceptance sets are surplus-invariant, law-invariant, conic, and truncation-closed; furthermore, $\cA_{\alpha}^-$ and $\cA_{\alpha}^+$ are num\'eraire-invariant.

%

\begin{proposition}\label{prop:VaRAcceptanceSets}
	(i) For any $\alpha\in [0,1]$, $\cA^-_\alpha$, $\cA^0_\alpha$, and $\cA^+_\alpha$ satisfy
	{\allowdisplaybreaks
		\begin{align}
		\cA^{-}_\alpha &=\{X\in \mathcal{X}\mid \var^u_{\alpha+\delta}(X)\le 0 \text{ for some }\delta\in (0, 1-\alpha)\},\label{equ:a_minus}\\
		\cA^{0}_\alpha &=\{X\in \mathcal{X}\mid \var^u_{\alpha}(X)\le 0\},\label{equ:a_zero}\\
		\cA^{+}_\alpha &=\{X\in \mathcal{X} \mid \var^l_\alpha(X)\le 0 \}.\label{equ:a_plus}
		\end{align}}%
	In particular, $\cA^-_1=\cA^0_1=\emptyset$ and $\cA_1^+=\{X\in\mathcal{X}\mid X\ge 0\ \text{a.s.}\}$.
	
	(ii) For any $\alpha\in[0,1]$, $\cA^-_\alpha$, $\cA^0_\alpha$, and $\cA^+_\alpha$ are surplus-invariant, law-invariant, conic, and truncation-closed acceptance sets, and
	$\cA^-_\alpha\subseteq \cA^0_\alpha\subseteq \cA^+_\alpha $; moreover, $\cA_\alpha^-$ and $\cA_\alpha^+$ are num\'eraire-invariant. For any $0\le \alpha_1<\alpha_2\le 1$, $\cA^-_{\alpha_1}\supseteq \cA^+_{\alpha_2}$.
\end{proposition}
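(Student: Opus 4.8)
The plan is to reduce everything in part (i) to the two distribution–quantile equivalences stated immediately before the proposition, namely that $F_X(x)<t$ iff $F_X^{-1}(t)>x$ and that $F_X(x-)\le t$ iff $F_X^{-1}(t+)\ge x$, together with the two identities $F_X(0-)=\prob(X<0)$ and $F_X(-\epsilon)=\prob(X\le-\epsilon)$. For part (ii) I would first record that each of the three membership conditions depends on $X$ only through the lower tail of its law, via the event identities $\{X<0\}=\{X^->0\}$ and, for $\epsilon>0$, $\{X\le-\epsilon\}=\{X^-\ge\epsilon\}$; every stated property then follows by monitoring how these events transform.

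For $\cA^+_\alpha$ I would apply the second equivalence with $t=1-\alpha$ and $x=0$: since $\var^l_\alpha(X)=-F_X^{-1}((1-\alpha)+)$, the inequality $\var^l_\alpha(X)\le 0$ reads $F_X^{-1}((1-\alpha)+)\ge 0$, which by that equivalence is exactly $F_X(0-)=\prob(X<0)\le 1-\alpha$. For $\cA^0_\alpha$ I would use the first equivalence: $\var^u_\alpha(X)\le 0$ means $F_X^{-1}(1-\alpha)\ge 0$, and since $F_X^{-1}$ is nondecreasing this holds iff $F_X(x)<1-\alpha$ for every $x<0$, i.e. iff $\prob(X\le-\epsilon)<1-\alpha$ for every $\epsilon>0$.

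The one genuinely delicate case, and the step I expect to be the main obstacle, is the characterization of $\cA^-_\alpha$, because of the existential quantifier ``for some $\delta\in(0,1-\alpha)$'' and the need to handle the left-continuity of $F_X^{-1}$ at the endpoint $1-\alpha$. Writing $t=1-\alpha-\delta$, the right-hand side asserts that $F_X^{-1}(t)\ge 0$ for some $t\in(0,1-\alpha)$. For the forward inclusion I would note that $\prob(X<0)<1-\alpha$ forces $1-\alpha>0$, so I can choose $t$ with $\prob(X<0)<t<1-\alpha$; since $F_X(x)\le F_X(0-)=\prob(X<0)<t$ for all $x<0$, the first equivalence gives $F_X^{-1}(t)>x$ for all $x<0$, whence $F_X^{-1}(t)\ge 0$. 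The converse is symmetric: if $F_X^{-1}(t)\ge 0$ for some $t\in(0,1-\alpha)$, then $F_X^{-1}(t)>x$ for all $x<0$, so $F_X(x)<t$ for all $x<0$ and $\prob(X<0)\le t<1-\alpha$. The degenerate case $\alpha=1$ is immediate from the defining inequalities, giving $\cA^-_1=\cA^0_1=\emptyset$ and $\cA^+_1=\{X\in\mathcal{X}\mid X\ge 0\ \text{a.s.}\}$.

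For part (ii) I would verify each property by tracking the relevant events. Surplus-invariance follows because $X^-\ge Y^-$ a.s. yields $\{Y^->0\}\subseteq\{X^->0\}$ and $\{Y^-\ge\epsilon\}\subseteq\{X^-\ge\epsilon\}$, so the defining probabilities can only decrease. Law-invariance is immediate since the conditions are functionals of the law of $X$. For conicity and num\'eraire-invariance I would use that $\{ZX<0\}=\{X<0\}$ for any strictly positive $Z$ (taking $Z\equiv\lambda$ for conicity), which settles $\cA^-_\alpha$ and $\cA^+_\alpha$ at once; for $\cA^0_\alpha$, conicity follows from $\{\lambda X\le-\epsilon\}=\{X\le-\epsilon/\lambda\}$ with $\epsilon/\lambda>0$. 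Truncation-closedness uses $\{X_d<0\}=\{X<0\}$ and, for $d>\epsilon$, $\{X_d\le-\epsilon\}=\{X\le-\epsilon\}$, where $X_d:=\min(\max(-d,X),d)$. Finally the inclusions $\cA^-_\alpha\subseteq\cA^0_\alpha\subseteq\cA^+_\alpha$ and $\cA^+_{\alpha_2}\subseteq\cA^-_{\alpha_1}$ for $\alpha_1<\alpha_2$ follow from $\prob(X\le-\epsilon)\le\prob(X<0)$, the continuity $\prob(X<0)=\lim_{\epsilon\downarrow 0}\prob(X\le-\epsilon)$, and the ordering $1-\alpha_2<1-\alpha_1$.
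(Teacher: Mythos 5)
Your proposal is correct and follows essentially the same route as the paper: both reduce part (i) to the two distribution--quantile equivalences stated just before the proposition, and verify part (ii) by tracking how the events $\{X<0\}$ and $\{X\le-\epsilon\}$ behave under sign-preserving transformations. The only (immaterial) difference is in the characterization of $\cA^-_\alpha$, where you choose an intermediate level $t\in(\prob(X<0),1-\alpha)$ and let $x\uparrow 0$, whereas the paper switches between the left- and right-continuous quantiles at $1-\alpha-\delta$ and then applies the equivalence $F_X(x-)\le t\Leftrightarrow F_X^{-1}(t+)\ge x$.
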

%

Note that the surplus-invariance of $\cA_{\alpha}^+$ is already shown in
\citet[][Example 4.1]{Staum2013:ExcessInvariance} and 
\citet[][Example 1]{KochMedinaEtal2016:Diversification}. It is also worth noting that the acceptance set $\cA_1^+$ requires that a firm is acceptable if and only if it never defaults, so it is only formally a VaR-induced acceptance set. Finally, 
the differences between $\cA^-_\alpha$, $\cA^0_\alpha$, and $\cA^+_\alpha$ are minor, although in general $\cA^-_\alpha\subsetneq \cA^0_\alpha$ and $\cA^0_\alpha\subsetneq \cA^+_\alpha$. Moreover, $\cA^0_\alpha$ is not num\'eraire-invariant in general, as shown by Example \ref{ex:VaRNotNI} in Appendix \ref{appx:Examples}.

\subsection{Surplus-Invariant, Law-Invariant, Conic, and Truncation-Closed Acceptance Sets Must be the Sets Induced by VaR}


We prove that any surplus-invariant, law-invariant, conic (or num\'eraire-invariant), and truncation-closed acceptance set must be a set induced by VaR. To this end, we follow the literature of law-invariant risk measures (e.g., \citet{Kusuoka:01licrm}, \citet{JouiniESchachermayerWTouziN:06rm}, and \citet{FrittelliMEmanuelaRG:05lawinvariantriskmeasure}) to restrict ourselves to {\em atomless} probability spaces that support a uniform random variable. Our results also hold for {\em equi-probable} probability spaces ($(\Omega, \cF, \prob)$ is an equi-probable probability space if $\Omega=\{\omega_1, \ldots, \omega_n\}$ and $\prob(\omega_i)=1/n$, $\forall i$); see Appendix \ref{appx:EqualProbable}.

\begin{theorem}\label{th:Main}
	Assume $(\Omega, \cF,\prob)$ is atomless. Then,
	
	(i) If $\cA$ is a surplus-invariant, law-invariant, conic, and truncation-closed acceptance set, then there exists $\alpha\in[0,1]$ such that either $\cA^-_\alpha\subseteq \cA\subseteq \cA^0_\alpha$ or $\cA=\cA^+_\alpha$.
	
	(ii) $\cA$ is a surplus-invariant, law-invariant, num\'eraire-invariant, and truncation-closed acceptance set if and only if there exists $\alpha\in[0,1]$ such that either $\cA=\cA^-_\alpha$ or $\cA=\cA^+_\alpha$.
\end{theorem}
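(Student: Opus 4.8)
The plan is to reduce acceptability to the distribution of the loss $X^-$, extract a single critical probability that pins down $\alpha$, and then read off the three VaR sets by a boundary analysis. First I would note that since $(\min(X,0))^-=X^-$, surplus-invariance used in both directions gives $X\in\cA\iff\min(X,0)\in\cA$, so together with law-invariance membership in $\cA$ depends only on the law of $X^-$. For $p\in[0,1]$ let $B_p:=-\ind_A$ with $\prob(A)=p$, which exists by atomlessness; law-invariance makes the acceptability of $B_p$ independent of $A$ and conicity makes it independent of the (positive) magnitude, so I may set $g(p)=1$ if $B_p\in\cA$ and $g(p)=0$ otherwise. Taking nested sets $A_1\subseteq A_2$ and applying surplus-invariance shows $g$ is non-increasing; moreover if $\cA\neq\emptyset$ then $0\in\cA$ (it is dominated in loss by any acceptable position), so $g(0)=1$. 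I then define $p^\ast:=\sup\{p:g(p)=1\}$ (with $\sup\emptyset=0$, which corresponds to $\cA=\emptyset=\cA^-_1$) and set $\alpha:=1-p^\ast$.

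Next I would establish the sandwich $\cA^-_\alpha\subseteq\cA\subseteq\cA^+_\alpha$. For $\cA\subseteq\cA^+_\alpha$, take $X\in\cA$; for each $\delta>0$ the position $-\delta\ind_{\{X\le-\delta\}}$ has loss dominated by $X^-$, hence lies in $\cA$ by surplus-invariance, and scaling it by $1/\delta$---a scaling of a bounded variable, so conicity applies within $\cL^\infty\subseteq\cX$---gives $B_{\prob(X\le-\delta)}\in\cA$, i.e.\ $\prob(X\le-\delta)\le p^\ast$; letting $\delta\downarrow0$ yields $\prob(X<0)\le p^\ast=1-\alpha$. For the reverse direction I would prove the one-line principle that $g(\prob(X<0))=1$ implies $X\in\cA$: when $X$ is bounded, $B_{\prob(X<0)}=-\ind_{\{X<0\}}\in\cA$ scaled up to $-M\ind_{\{X<0\}}$ with $M\ge\|X^-\|_\infty$ dominates $X^-$, so surplus-invariance gives $X\in\cA$; when $X$ is unbounded, every truncation $\min(\max(-d,X),d)$ has the same default probability $\prob(X<0)$ and is therefore acceptable, so truncation-closedness gives $X\in\cA$. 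Since $\prob(X<0)<1-\alpha$ forces $g(\prob(X<0))=1$, this yields $\cA^-_\alpha\subseteq\cA$. The guiding device throughout is to apply conicity only to bounded variables, so that no assumption that $\cX$ is a cone is needed and unbounded positions are reached solely through truncation-closedness.

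The refinement in (i) is governed by the single value $g(p^\ast)$. If $g(p^\ast)=1$, then the principle above applies to every $X\in\cA^+_\alpha$ (where $\prob(X<0)\le p^\ast$, so $g(\prob(X<0))=1$), giving $\cA^+_\alpha\subseteq\cA$ and hence $\cA=\cA^+_\alpha$. If $g(p^\ast)=0$, I would show $\cA\subseteq\cA^0_\alpha$ by contradiction: an $X\in\cA\setminus\cA^0_\alpha$ must satisfy $\prob(X\le-\epsilon_0)=1-\alpha=\prob(X<0)$ for some $\epsilon_0>0$ (using $\cA\subseteq\cA^+_\alpha$), so $X^-\ge\epsilon_0$ on $\{X<0\}$; the bounded, acceptable truncation $W=\max(\min(X,0),-\epsilon_0)$ then satisfies $W^-\ge\epsilon_0\ind_{\{X<0\}}$, and scaling $W$ by $1/\epsilon_0$ produces an acceptable position whose loss dominates $\ind_{\{X<0\}}$, forcing $B_{p^\ast}\in\cA$ and contradicting $g(p^\ast)=0$. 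Combining the cases gives either $\cA=\cA^+_\alpha$ or $\cA^-_\alpha\subseteq\cA\subseteq\cA^0_\alpha$, which proves (i).

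For (ii), the ``if'' direction is Proposition \ref{prop:VaRAcceptanceSets}, and since num\'eraire-invariance implies conicity, part (i) applies; the only remaining task is to upgrade the intermediate case $\cA^-_\alpha\subseteq\cA\subseteq\cA^0_\alpha$ to $\cA=\cA^-_\alpha$. Given a hypothetical $X\in\cA\setminus\cA^-_\alpha$, I would pass to $\tilde X=\min(X,0)\in\cA$, which has $\prob(\tilde X<0)=1-\alpha$ but, being in $\cA^0_\alpha$, has its loss accumulating at $0$; choosing the strictly positive $Z$ equal to $1/\tilde X^-$ on $\{X<0\}$ and to $1$ elsewhere gives $Z\tilde X=-\ind_{\{X<0\}}\in\cL^\infty\subseteq\cX$, so num\'eraire-invariance forces $Z\tilde X\in\cA$, while $Z\tilde X\notin\cA^0_\alpha$, contradicting $\cA\subseteq\cA^0_\alpha$; hence $\cA=\cA^-_\alpha$. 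I expect the main obstacle to be precisely this boundary analysis at $\prob(X<0)=1-\alpha$: separating the three VaR sets hinges on the case split on $g(p^\ast)$ and on arranging every application of conicity or num\'eraire-invariance to act on a bounded variable, so that the passage to unbounded positions is carried entirely by truncation-closedness.
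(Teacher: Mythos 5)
Your proposal is correct in substance, and your part (i) takes a genuinely different route from the paper's. The paper defines $\alpha:=1-\sup_{X\in\cA}\prob(X<0)$ and proves both $\cA^-_\alpha\subseteq\cA$ and the dichotomy by comonotonic coupling: on an atomless space it writes an acceptable $X$ as $F_X^{-1}(U+)$ (resp.\ $F_X^{-1}(U)$) for a uniform $U$, sets $\tilde Y:=F_Y^{-1}(U+)$ (resp.\ $F_Y^{-1}(U)$) for a bounded truncation $Y$ of the target position, and produces $\epsilon>0$ with $(\epsilon\tilde Y)^-\le X^-$ a.s., after which surplus-invariance, conicity, and law-invariance finish. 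Your reduction to the indicator positions $B_p=-\ind_A$ and the monotone $\{0,1\}$-valued function $g$ eliminates the quantile machinery entirely: the sandwich $\cA^-_\alpha\subseteq\cA\subseteq\cA^+_\alpha$ follows from scaling $-\delta\ind_{\{X\le-\delta\}}$, and the paper's case distinction (``does there exist $X\in\cA\setminus\cA^0_\alpha$?'') becomes the single value $g(p^*)$; your boundary analysis --- from $X\in\cA^+_\alpha\setminus\cA^0_\alpha$ deduce $\prob(X\le-\epsilon_0)=\prob(X<0)=p^*$, hence $X^-\ge\epsilon_0$ a.s.\ on $\{X<0\}$, hence $B_{p^*}\in\cA$ via the bounded position $W=\max(\min(X,0),-\epsilon_0)$ --- is exactly what makes the two dichotomies equivalent. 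What your route buys is elementarity (no quantile-function lemmas; atomlessness is used only to produce nested events of prescribed probability) and a cleaner localization of the $\cA^0_\alpha$-versus-$\cA^+_\alpha$ distinction in one number $g(p^*)$; your bookkeeping of the degenerate cases ($\cA=\emptyset$, $p^*=0$, $\alpha=1$) is also consistent.

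There is one genuine flaw, in part (ii) (and in your opening claim that $X\in\cA\iff\min(X,0)\in\cA$): you pass to $\tilde X=\min(X,0)$ and assert $\tilde X\in\cA$, but $\cX$ is only assumed to be a subset of $\mathcal{L}^0$ containing $\mathcal{L}^\infty$, so for unbounded $X$ the variable $\min(X,0)$ need not belong to $\cX$ at all, and then neither surplus-invariance nor num\'eraire-invariance can be applied to it --- this violates your own (correctly stated) policy of applying the axioms only to bounded positions. The repair is immediate and is precisely what the paper does: take the bounded truncation $Y:=\min(d,\max(-d,X))\in\mathcal{L}^\infty\subseteq\cX$, which lies in $\cA$ by surplus-invariance and satisfies $\prob(Y<0)=\prob(X<0)=1-\alpha$; then $Z:=-(1/Y)\ind_{\{Y<0\}}+\ind_{\{Y\ge 0\}}$ is strictly positive, $ZY=-\ind_{\{Y<0\}}+Y\ind_{\{Y\ge 0\}}$ is bounded, num\'eraire-invariance forces $ZY\in\cA$, yet $\prob(ZY\le-\epsilon)\ge 1-\alpha$ for $\epsilon\in(0,1)$ gives $ZY\notin\cA^0_\alpha\supseteq\cA$, the desired contradiction. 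With that substitution your part (ii) coincides with the paper's proof.
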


Theorem \ref{th:Main} shows that surplus-invariant, law-invariant, conic (or num\'eraire-invariant), and truncation-closed acceptance sets are essentially the sets induced by VaR. As a technical comment, there exists a surplus-invariant, law-invariant, conic, and truncation-closed acceptance set $\cA$ such that $\cA^-_\alpha\subsetneq \cA\subsetneq \cA^0_\alpha$; see Example \ref{ex:SINotVaR} in Appendix \ref{appx:Examples}.\footnote{This example also illustrates that for a surplus-invariant acceptance set, conicity is not equivalent to num\'eraire-invariance without any topological assumption on the set. For a comparison, see \citet[Proposition 5]{KochMedinaEtal2016:Diversification}.}

It is worth noting that an acceptance set $\cA$ that satisfies the properties in Theorem \ref{th:Main} cannot be induced by expected shortfall unless $\cA$ is trivial, i.e., is the empty set or the whole set of capital positions in $\mathcal{X}$ whose expected shortfall is well defined (i.e., finite). In fact, for the sake of contradiction, suppose such a nontrivial $\cA$ is induced by the expected shortfall at certain level $\beta \in [0,1]$; i.e., $\cA = \{X\in \mathcal{X} \mid \text{ES}_{\beta}(X)\in B\}$, where $B\subseteq \mathbb R$ and $\text{ES}_{\beta}(X):=\frac{1}{1-\beta}\int_{\beta}^1 \var^{l}_\alpha(X) d\alpha=\frac{1}{1-\beta}\int_{\beta}^1 F_{-X}^{-1}(\alpha)d\alpha$ for $\beta\in[0, 1)$ and $\text{ES}_{1}(X):=\text{essup}\,(-X)$. Note that $\text{ES}_{\beta}(X)$ is well defined for $X$ whose negative part is integrable when $\beta\in(0,1)$, for $X$ that is integrable when $\beta=0$, and for $X$ that is bounded from below when $\beta=1$. Now, for
any constant $c\ge 0$ and $X\in \mathcal{X}$, we have $c^-=0\le X^-$ . Recalling that $\cA$ is nonempty and surplus-invariant, we must have $c\in \cA$, implying that $(-\infty,0]\subseteq B$. On the other hand, if there exists $X\in \cA$ such that $\text{ES}_{\beta}(X)>0$, then it follows from the conicity of $\cA$ that $(0, \infty)\subseteq B$, which implies that $B=\mathbb{R}$ and thus $\cA = \{X\in \mathcal{X}\mid \text{ES}_{\beta}(X)\in \mathbb{R}\}$, a trivial set, which leads to a contradiction. Hence, there does not exist $X\in \cA$ such that $\text{ES}_{\beta}(X)>0$, implying $B=(-\infty,0]$; i.e., $\cA = \{X\in  \mathcal{X}\mid \text{ES}_{\beta}(X)\leq 0\}$. And then, it follows from \citet[][Example 5]{KochMedinaEtal2016:Diversification} that $ \{X\in  \mathcal{X}\mid \text{ES}_{\beta}(X)\leq 0\}$ is not surplus-invariant, which contradicts the assumption that $\cA$ is surplus-invariant. In addition, even
the trivial set $\cA = \{X\in \mathcal{X}\mid \text{ES}_{\beta}(X)\in \mathbb{R}\}$ may not satisfy the properties in Theorem \ref{th:Main}. For example, if $\mathcal{X}=\mathcal{L}^0(\Omega, \mathcal{F}, \prob)$, then $\cA = \{X\in \mathcal{X}\mid \text{ES}_{\beta}(X)\in \mathbb{R}\}$ is not truncation-closed.

Theorem \ref{th:Main} can be simplified if we further require that an acceptance set is closed with respect to some suitable topology on $\mathcal{X}$. In particular, the space $\mathcal{L}^{0}(\Omega,\cF,\prob)$ is a complete metric space under the Ky Fan metric (the metric is defined as $d(X,Y):=\expect[\min(|X-Y|, 1)]$, $\forall X, Y$), which metrizes convergence in probability; see, e.g., \citet[][Theorems 9.2.2 and 9.2.3, pp.289--290]{Dudley-2004}. Therefore, a  subset of $\mathcal{L}^{0}(\Omega,\cF,\prob)$ is closed under the Ky Fan metric topology if and only if it is closed under convergence in probability. Hence, we consider the following topological property of the acceptance set $\cA$:
\begin{enumerate}
	\setlength{\itemindent}{1.5ex}
	\item[(vi)]{\em Convergence-in-probability-closedness}: if $X_n\to X\in \mathcal{X}$ in probability as $n\to \infty$ and $X_n\in \cA$ for all $n$, then $X\in \cA$.
\end{enumerate}
Since $\min(\max(-d, X),d)\to X$ a.s. as $d\to \infty$, it follows that convergence-in-probability-closedness implies truncation-closedness. Theorem \ref{th:Main} can be simplified to be the following Theorem \ref{thm:closed} if we require that an acceptance set is closed with respect to the metric topology of $\mathcal{L}^{0}(\Omega,\cF,\prob)$, or, equivalently, closed with respect to convergence in probability.

\begin{theorem}\label{thm:closed}
	Assume $(\Omega, \cF,\prob)$ is atomless. Then, $\cA$ is a nonempty, surplus-invariant, law-invariant, conic (or num\'eraire-invariant), and convergence-in-probability-closed acceptance set if and only if there exists $\alpha\in[0,1]$ such that $\cA = \cA^+_\alpha$.
\end{theorem}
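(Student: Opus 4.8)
The plan is to reduce everything to Theorem~\ref{th:Main} plus a single closure computation. Since $\min(\max(-d,X),d)\to X$ almost surely, hence in probability, convergence-in-probability-closedness implies truncation-closedness; thus any $\cA$ satisfying the hypotheses of Theorem~\ref{thm:closed} already satisfies those of Theorem~\ref{th:Main}. Moreover, since num\'eraire-invariance implies conicity, it suffices to treat the conic case throughout. The only genuinely new ingredient will be the identity $\overline{\cA^-_\alpha}=\cA^+_\alpha$, where the bar denotes closure under convergence in probability; this is what will force the two alternatives in Theorem~\ref{th:Main}(i) to collapse to $\cA=\cA^+_\alpha$.

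For the ``if'' direction, Proposition~\ref{prop:VaRAcceptanceSets} already gives that $\cA^+_\alpha$ is nonempty (it contains the nonnegative constants), surplus-invariant, law-invariant, and num\'eraire-invariant (hence conic), so the only thing left to check is convergence-in-probability-closedness. I would verify this directly: if $X_n\to X$ in probability with $\prob(X_n<0)\le 1-\alpha$ for all $n$, then for each fixed $\delta>0$ the inclusion $\{X<-\delta\}\cap\{|X_n-X|\le\delta\}\subseteq\{X_n<0\}$ yields $\prob(X<-\delta)\le\prob(X_n<0)+\prob(|X_n-X|>\delta)\le 1-\alpha+\prob(|X_n-X|>\delta)$; letting $n\to\infty$ gives $\prob(X<-\delta)\le 1-\alpha$, and then letting $\delta\downarrow 0$ gives $\prob(X<0)\le 1-\alpha$, i.e. $X\in\cA^+_\alpha$.

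For the ``only if'' direction, I would apply Theorem~\ref{th:Main}(i) to obtain $\alpha\in[0,1]$ with either $\cA=\cA^+_\alpha$ (done) or $\cA^-_\alpha\subseteq\cA\subseteq\cA^0_\alpha$. To dispose of the second alternative I would prove $\overline{\cA^-_\alpha}=\cA^+_\alpha$. The inclusion $\overline{\cA^-_\alpha}\subseteq\cA^+_\alpha$ follows from the closedness of $\cA^+_\alpha$ just established together with $\cA^-_\alpha\subseteq\cA^+_\alpha$. For the reverse inclusion, take $X\in\cA^+_\alpha$; if $\prob(X<0)<1-\alpha$ then $X\in\cA^-_\alpha$ already, while if $\prob(X<0)=1-\alpha$ I would use atomlessness to choose, for each large $n$, a measurable $B_n\subseteq\{X<0\}$ with $\prob(B_n)=1/n$ and set $X_n:=X\ind_{B_n^c}+\ind_{B_n}$; then $\prob(X_n<0)=(1-\alpha)-1/n<1-\alpha$, so $X_n\in\cA^-_\alpha$, and $X_n\to X$ in probability since they differ only on $B_n$. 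Given this identity, the second alternative collapses: as $\cA$ is convergence-in-probability-closed and contains $\cA^-_\alpha$, it contains $\overline{\cA^-_\alpha}=\cA^+_\alpha$, and combined with $\cA\subseteq\cA^0_\alpha\subseteq\cA^+_\alpha$ (Proposition~\ref{prop:VaRAcceptanceSets}) this forces $\cA=\cA^+_\alpha$.

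The hard part will be the closure identity $\overline{\cA^-_\alpha}=\cA^+_\alpha$, and within it the delicate passage between the strict and non-strict inequalities defining $\cA^-_\alpha$, $\cA^0_\alpha$, and $\cA^+_\alpha$: the ``if'' computation must be routed through the $\delta$-approximation above to sidestep continuity-set issues under convergence in probability, and the approximating sequence in the reverse inclusion must shave off exactly enough mass from $\{X<0\}$ (which atomlessness permits) to recover the strict inequality. I would also track the boundary cases, the key one being $\alpha=1$: there the second alternative reads $\emptyset=\cA^-_1\subseteq\cA\subseteq\cA^0_1=\emptyset$, which contradicts nonemptiness, so only $\cA=\cA^+_1=\{X\in\mathcal{X}\mid X\ge 0\ \as\}$ survives, consistently with the claimed form.
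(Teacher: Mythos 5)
Your overall architecture is the same as the paper's: the ``if'' direction is handled by verifying convergence-in-probability-closedness of $\cA^+_\alpha$ with exactly the $\delta$-estimate you give, and the ``only if'' direction is reduced, via Theorem \ref{th:Main}(i) and nonemptiness at $\alpha=1$, to showing that for $\alpha\in[0,1)$ every element of $\cA^+_\alpha$ is a limit in probability of elements of $\cA^-_\alpha$. The gap is in your construction of that approximating sequence: membership in $\cA^-_\alpha$ requires membership in $\mathcal{X}$, and your approximants $X_n=X\ind_{B_n^c}+\ind_{B_n}$ need not lie in $\mathcal{X}$. The space $\mathcal{X}$ is only assumed to be a subset of $\mathcal{L}^{0}(\Omega,\cF,\prob)$ containing $\mathcal{L}^{\infty}(\Omega,\cF,\prob)$; it need not be closed under modifying a position on a small set. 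For instance, take $\mathcal{X}=\mathcal{L}^{\infty}(\Omega,\cF,\prob)\cup\{X_0\}$ with $X_0$ unbounded and $\prob(X_0<0)=1-\alpha$: then $X_0\in\cA^+_\alpha$, but each $X_n=X_0\ind_{B_n^c}+\ind_{B_n}$ is unbounded and distinct from $X_0$, hence $X_n\notin\mathcal{X}$, so $X_n\notin\cA^-_\alpha$ and your sequence does not exhibit $X_0$ as a point of $\overline{\cA^-_\alpha}$. (Your argument is fine when $\mathcal{X}$ is $\mathcal{L}^{0}$ or $\mathcal{L}^{p}$, since there $|X_n|\le\max(|X|,1)$ keeps $X_n$ in the space, but the theorem is stated for arbitrary $\mathcal{X}\supseteq\mathcal{L}^{\infty}$.)

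The paper avoids this by keeping every approximant bounded: it first truncates, setting $Y_n:=\min(n,\max(-n,X))\in\mathcal{L}^{\infty}(\Omega,\cF,\prob)\subseteq\mathcal{X}$, notes that $Y_n\in\cA^+_\alpha\setminus\cA^-_\alpha$, approximates each fixed $Y_n$ in probability by bounded positions $Z^n_k\in\cA^-_\alpha$ (removing a vanishing amount of mass from $\{Y_n<0\}$ near $a=\essinf Y_n$, splitting into the cases $\prob(Y_n=a)=0$ and $\prob(Y_n=a)>0$, the latter via atomlessness), and then extracts a diagonal sequence $X_k:=Z^k_{m_k}$ using the Ky Fan metric. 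Your mass-shaving idea is otherwise sound, and in fact cleaner than the paper's two-case analysis at the essential infimum, since shaving an arbitrary subset of $\{X<0\}$ of measure $1/n$ handles both cases at once; the repair is simply to apply it to the truncations $Y_n$ (whose modifications stay in $\mathcal{L}^{\infty}\subseteq\mathcal{X}$) and then diagonalize as the paper does, rather than to $X$ itself.
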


The proof of Theorem \ref{thm:closed} is obtained by applying Theorem \ref{th:Main} and verifying that for $\alpha\in[0,1)$, $\cA^+_{\alpha}$
is the closure of $\cA^-_{\alpha}$ in $\mathcal{X}$
under the metric topology of $\mathcal{L}^{0}(\Omega,\cF,\prob)$. Similar results can be obtained if one considers  $\mathcal{X}\subseteq\mathcal{L}^{p}(\Omega,\cF,\prob)$ for some $p\in[1,+\infty)$ and the closedness with respect to the $L^p$-norm, but not for the case of $\mathcal{X}=\mathcal{L}^{\infty}(\Omega,\cF,\prob)$ and the closedness with respect to the $L^\infty$-norm or the weak-star topology; see Appendix \ref{appx:Topology}.

\section{Conclusions}\label{se:Conclusions}
The regulator is interested in proposing a capital adequacy test by specifying an acceptance set for firms' capital positions at the end of a given period. Such test aims to protect firms' liability holders, so whether a firm passes the test should not depend on the surplus of the firm's shareholders, a property referred to as surplus-invariance. We prove that surplus-invariant, law-invariant, and conic acceptance sets must be the sets induced by VaR, i.e., must be the set of capital positions whose VaR at a given level is less than or equal to zero.
We also prove that surplus-invariant, law-invariant, and num\'eraire-invariant acceptance sets must be the sets induced by VaR.

Our results contribute to show the potential conflicts between competing properties of capital adequacy tests.
\citet{KochMedinaEtal2015:CapitalAdequacyTests} show that
the class of acceptance sets that are convex and num\'eraire-invariant (or
equivalently coherent and surplus-invariant) is limited to scenario tests; in addition, law-invariance further narrows the domain to the positive cone alone. We show that, if one drops convexity, the choice is enlarged only by VaR-induced acceptance sets. On the other hand, if one drops conicity or num\'eraire-invariance, then there is a rich class of surplus-invariant, law-invariant, and convex acceptance sets; see for instance \citet{ContDeguestHe2013:LossBasedRiskMeasures}, \citet{KochMedinaEtal2015:CapitalAdequacyTests}, and \citet{KochMedinaEtal2016:Diversification}. As a result, if one finds the range of choices of acceptance sets too restrictive, then one needs to sacrifice one of the properties
under investigation and is therefore forced to set up criteria to prioritize competing properties such as convexity, coherence, surplus-invariance, num\'eraire-invariance, and law-invariance. These properties might be desirable
if considered alone but lead to a narrow range of choices when  put together.\footnote{\label{fn:acknow_para}We are grateful to an anonymous referee for phrasing most of the sentences of this paragraph in his/her referee reports for our paper.}

%
%
%

\appendix

\section{Proofs}

\begin{proof}[Proof of Proposition \ref{prop:VaRAcceptanceSets}.]

(i)	For $\alpha=1$, it follows from the definitions in \eqref{equ:a_minus_alt}, \eqref{equ:a_zero_alt} and \eqref{equ:a_plus_alt} that $\cA^-_1=\cA^0_1=\emptyset$ and $\cA_1^+=\{X\in\mathcal{X}\mid X\ge 0\ \text{a.s.}\}$, which in combination with $\var^u_1(X)=-F_X^{-1}(0)=+\infty$ and $\var^l_1(X)=-F_X^{-1}(0+) = -\text{essinf}\, X$, imply that \eqref{equ:a_minus}, \eqref{equ:a_zero}, and \eqref{equ:a_plus} hold for $\alpha=1$.

We then prove \eqref{equ:a_minus}, \eqref{equ:a_zero}, and \eqref{equ:a_plus} for $\alpha\in [0, 1)$. Recall that for any random variable $X$, $x\in \mathbb{R}$, and $t\in [0,1]$, $F_X(x)< t\Leftrightarrow F_X^{-1}(t)> x$ and $F_X(x-)\le t\Leftrightarrow F_X^{-1}(t+)\ge x$. We will show that 
$F_X^{-1}(1-\alpha-\delta)\ge 0\ \text{for some}\ \delta\in(0,1-\alpha) \Leftrightarrow F_X^{-1}((1-\alpha-\delta')+)\ge 0\ \text{for some}\ \delta'\in(0,1-\alpha)$. The ``$\Rightarrow$" direction follows from $F_X^{-1}((1-\alpha-\delta)+)\geq F_X^{-1}(1-\alpha-\delta)$. To show the ``$\Leftarrow$" direction, suppose $F_X^{-1}((1-\alpha-\delta')+)\ge 0\ \text{for some}\ \delta'\in(0,1-\alpha)$. Let $\delta=\delta'/2$. Then, the monotonicity of $F_X^{-1}(\cdot)$ implies that $F_X^{-1}(1-\alpha-\delta)\geq F_X^{-1}((1-\alpha-\delta')+)\ge 0$.

Then, by \eqref{equ:var_u}, $\var^u_{\alpha+\delta}(X)\le 0\ \text{for some}\ \delta \in(0,1-\alpha)\Leftrightarrow F_X^{-1}(1-(\alpha+\delta))\ge 0\ \text{for some}\ \delta \in(0,1-\alpha)\Leftrightarrow F_X^{-1}((1-(\alpha+\delta'))+)\ge 0\ \text{for some}\ \delta' \in(0,1-\alpha)\Leftrightarrow 1-(\alpha+\delta')\ge F_X(0-)\ \text{for some}\ \delta' \in(0,1-\alpha)\Leftrightarrow \prob(X< 0) = F_X(0-)<1-\alpha$. Thus, \eqref{equ:a_minus} holds. Similarly, by \eqref{equ:var_u}, $\var^u_{\alpha}(X)\le 0 \Leftrightarrow F_X^{-1}(1-\alpha)\ge 0 \Leftrightarrow -\epsilon<F_X^{-1}(1-\alpha)\ \text{for any}\ \epsilon>0\Leftrightarrow \prob(X\le -\epsilon)=F_X(-\epsilon)<1-\alpha\ \text{for any}\ \epsilon>0$. Thus, \eqref{equ:a_zero} holds. Finally, by \eqref{equ:var_l}, $\var^l_{\alpha}(X)\leq 0\Leftrightarrow F_{X}^{-1}((1-\alpha)+)\geq  0 \Leftrightarrow \prob(X<0)=F_X(0-)\leq  1-\alpha$. Thus, \eqref{equ:a_plus} holds.

(ii) By definition, $\cA^-_\alpha$, $\cA^0_\alpha$, and $\cA^+_\alpha$ are law-invariant and conic. Moreover, it follows from \eqref{equ:a_minus_alt} that, for any $X\in \cA^-_\alpha$ and any $Y\in \cX$ satisfying $Y^-\le X^-$ a.s., $\prob(Y < 0) = \prob(-Y^-<0)\leq \prob(-X^-<0) = \prob(X < 0)<1-\alpha$, which implies $Y\in \cA^-_\alpha$. Hence, $\cA^-_\alpha$ is surplus-invariant. A similar argument
with the application of \eqref{equ:a_zero_alt} and \eqref{equ:a_plus_alt}
yields that $\cA^0_\alpha$ and $\cA^+_\alpha$ are also surplus-invariant. For any $X\in\mathcal{X}$ and any $d>0$, it holds that $\prob(\min(\max(-d, X),d) < 0)=\prob(X<0)$, which in combination with \eqref{equ:a_minus_alt} implies that $\cA^-_\alpha$ is truncation-closed. A similar argument yields that $\cA^0_\alpha$ and $\cA^+_\alpha$ are also truncation-closed.

Note that for any $X\in \mathcal{X}$ and any strictly positive random variable $Z$ such that $ZX\in \mathcal{X}$, $\prob(ZX<0)=\prob(X<0)$. Then, it follows from \eqref{equ:a_minus_alt} and \eqref{equ:a_plus_alt} that  $\cA^-_\alpha$ and $\cA^+_\alpha$ are num\'eraire-invariant.

Finally, it follows from \eqref{equ:a_minus_alt}, \eqref{equ:a_zero_alt}, and \eqref{equ:a_plus_alt} that $\cA^-_\alpha\subseteq \cA^0_\alpha\subseteq \cA^+_\alpha$ for any $\alpha\in[0,1]$ and $\cA^-_{\alpha_1}\supseteq \cA^+_{\alpha_2}$ for any $0\le \alpha_1<\alpha_2\le 1$.
\end{proof}

\begin{proof}[Proof of Theorem \ref{th:Main}.]
The proof of part (i) is as follows. If $\cA=\emptyset$, then $\cA=\cA^-_1=\cA^0_1$, so we assume $\cA$ to be nonempty in the following. Define
\begin{equation}\label{equ:alpha_def}
\alpha:= 1-\sup_{X\in\cA}\prob(X< 0)=1-\sup_{X\in\cA}F_X(0-).
\end{equation}
If $\alpha=1$, then by definition of $\alpha$, $\cA \subseteq \{X\in\mathcal{X}\mid X\ge 0\ \text{a.s.}\}=\cA^+_1$. On the other hand, because $\cA$ is nonempty and surplus-invariant, $\cA^+_1\subseteq \cA$. Therefore, $\cA=\cA^+_1$.

Now, suppose $\alpha\in[0,1)$.  First, we will show $\cA^-_\alpha\subseteq \cA$; i.e., for any $Z\in \cA^-_{\alpha}\subseteq \mathcal{X}$, we will show that $Z\in \cA$. Since $\cA$ is truncation-closed, we only need to show that for any fixed $d>0$, $Y:=\min(\max(-d, Z),d)\in \cA$. Since $Z\in \cA^-_{\alpha}$, it follows from \eqref{equ:a_minus_alt} that $\prob(Y<0)=\prob(Z<0)<1-\alpha$. Hence, there exists $\delta \in (0,1-\alpha)$ such that $F_Y(0-)=\prob(Y<0)\le 1- \alpha-\delta$. By the definition of $\alpha$, there exists $X\in \cA$ such that $F_X(0-)=\prob(X<0)> 1-\alpha-\delta$. Because $(\Omega, \cF,\prob)$ is atomless, there exists a uniformly distributed random variable $U$ on $(\Omega, \cF,\prob)$ such that $X=F_X^{-1}(U+)$ a.s. (see, e.g., Lemma A.32 in \citet{FollmerHSchied-4th}). Define $\tilde Y:=F_Y^{-1}(U+)$. Then, it follows from Lemma A.23 in \citet{FollmerHSchied-4th} that $\tilde Y$ has the same distribution as $Y$. Because both $Y$ and $\tilde Y$ are bounded and thus contained in $\mathcal{X}$ and $\cA$ is law-invariant, we only need to show that $\tilde Y\in \cA$.

Because $F_X(0-)>1-\alpha-\delta$, we have $F_X^{-1}((1-\alpha-\delta)+)<0$. As a result, because $F_Y^{-1}(0+)\geq -d$, there exists $\epsilon>0$ such that $\epsilon F_Y^{-1}(0+)\ge F_X^{-1}((1-\alpha-\delta)+)$, which implies that
$\epsilon F_Y^{-1}(z+)\ge \epsilon F_Y^{-1}(0+)\ge F_X^{-1}((1-\alpha-\delta)+) \ge F_X^{-1}(z+)$ for any $z\in (0,1-\alpha-\delta]$. Moreover, because $F_Y(0-)\le 1-\alpha-\delta$, we have $F_Y^{-1}(z+)\ge F_Y^{-1}((1-\alpha-\delta)+)\ge 0,z\in(1-\alpha-\delta,1)$. Consequently, $\min\left (\epsilon F_Y^{-1}(z+),0 \right) \ge \min\left( F_X^{-1}(z+),0\right)$ for any $z\in (0,1)$ and thus $(\epsilon \tilde Y)^- = - \min\left (\epsilon F_Y^{-1}(U+),0 \right)\le - \min\left( F_X^{-1}(U+),0\right) =  X^-$ a.s. Because $\epsilon\tilde Y\in \mathcal{L}^\infty(\Omega, \cF,\prob)\subseteq\mathcal{X}$, $X\in \cA$, and $\cA$ is surplus-invariant, $\epsilon \tilde Y\in \cA$. Then, because $\tilde Y\in\mathcal{L}^\infty(\Omega, \cF,\prob)\subseteq \mathcal{X}$ and $\cA$ is conic, we obtain that $\tilde Y\in \cA$.

On the other hand, \eqref{equ:a_plus_alt} and the definition of $\alpha$ in \eqref{equ:alpha_def} imply $\cA\subseteq \cA^+_\alpha$. Thus, we have proved that $\cA^-_\alpha \subseteq \cA\subseteq \cA^+_\alpha$. Hence, there are only two cases: (1) $\cA^-_\alpha\subseteq \cA\subseteq \cA^0_\alpha$; (2) there exists $X\in \cA$ such that $X\notin \cA^0_\alpha$. To complete the proof, we only need to show that the second case leads to $\cA=\cA^+_\alpha$.

In fact, in the second case, let $X\in \cA$ be such that $X\notin \cA^0_\alpha$. Since $X\in \cA$, it follows from \eqref{equ:alpha_def} that $F_X(0-)\leq 1-\alpha$, which implies that $F_X^{-1}\big((1-\alpha)+\big)\ge 0$. Since $X\notin \cA^0_\alpha$, it follows from \eqref{equ:a_zero} that $\var^u_{\alpha}(X)>0$, which in combination with \eqref{equ:var_u} implies that $F_X^{-1}(1-\alpha)<0$.

Because $(\Omega,\cF,\prob)$ is atomless, there exists a uniform random variable $U$ such that $X=F_X^{-1}(U)$ a.s. (see, e.g., Lemma A.32 in \citet{FollmerHSchied-4th}). Now, for any $Z\in \cA^{+}_\alpha\subseteq \mathcal{X}$, we will show that $Z\in \cA$. Since $\cA$ is truncation-closed, we only need to show that for any fixed $d>0$, $Y:=\min(d,\max(-d, Z))\in \cA$. Since $Z\in \cA^{+}_\alpha$, $Y\in \mathcal{L}^\infty(\Omega,\cF,\prob)\subseteq \mathcal{X}$, and $\prob(Y<0)=\prob(Z<0)$, it follows from \eqref{equ:a_plus_alt} that $Y\in \cA^{+}_\alpha$ and $F_Y(0-)\leq 1-\alpha$, which implies that
$F_Y^{-1}((1-\alpha)+)\ge 0$. Define $\tilde Y:=F_Y^{-1}(U)$. Then, it follows from Lemma A.23 in \citet{FollmerHSchied-4th} that $\tilde Y$ has the same distribution as $Y$. Because $F_X^{-1}(1-\alpha)<0$ and $F_Y^{-1}(0+)\geq -d$, there exists $\epsilon>0$ such that $\epsilon F_Y^{-1}(0+)\ge F_X^{-1}(1-\alpha)$, which implies that $\epsilon F_Y^{-1}(z)\ge \epsilon F_Y^{-1}(0+) \ge F_X^{-1}(1-\alpha) \ge F_X^{-1}(z), \forall z\in(0,1-\alpha]$. Moreover, because $F_Y^{-1}(z)\ge F_Y^{-1}((1-\alpha)+)\ge 0,z\in(1-\alpha,1)$, we conclude that $\min\left (\epsilon F_Y^{-1}(z),0 \right) \ge \min\left( F_X^{-1}(z),0\right)$ for any $z\in (0,1)$. Consequently, $(\epsilon \tilde Y)^- = - \min\left (\epsilon F_Y^{-1}(U),0 \right)\le - \min\left( F_X^{-1}(U),0\right) =  X^-$ a.s. Since $\cA$ is surplus-invariant and $\epsilon \tilde Y\in \mathcal{L}^\infty(\Omega,\cF,\prob)\subset \mathcal{X}$, it follows that $\epsilon \tilde Y\in \cA$. Because both $Y$ and $\tilde Y$ are bounded and thus contained in $\mathcal{X}$, conicity of $\cA$ implies that $\tilde Y\in \cA$ and law-invariance of $\cA$ implies that $Y\in \cA$. Therefore, $\cA=\cA^+_\alpha$, which completes the proof of part (i).

The proof of part (ii) is as follows. The ``if" direction follows from Proposition \ref{prop:VaRAcceptanceSets}. To show the ``only if" direction, suppose $\cA$ is surplus-invariant, law-invariant,  num\'eraire-invariant, and truncation-closed. Since num\'eraire-invariance implies conicity, it follows from part (i) of the theorem that there exists $\alpha\in [0, 1]$, such that either $\cA^-_{\alpha}\subseteq \cA \subseteq \cA^0_{\alpha}$ or $\cA=\cA^+_{\alpha}$. There are two cases: $\alpha=1$ and $\alpha\in [0, 1)$. In the first case, it follows from \eqref{equ:a_minus_alt} and \eqref{equ:a_zero_alt} that $\cA^-_{\alpha}=\cA^0_{\alpha}=\emptyset$. Hence, either $\cA=\cA^-_{\alpha}=\emptyset$ or $\cA=\cA^+_{\alpha}$. In the second case, suppose for the sake of contradiction that $\cA^-_\alpha\subsetneq \cA \subseteq \cA^0_\alpha$. Then, there exists $X\in \cA\subseteq \cA^0_\alpha$ such that
$X\notin \cA^-_\alpha$. Fix $d>0$ and define $Y:=\min(d,\max(-d,X))$. Noting $Y\in \mathcal{L}^\infty(\Omega, \cF,\prob)\subseteq \mathcal{X}$ and $Y^-\leq X^-$, we conclude from the  surplus-invariance of $\cA$ that $Y\in \cA \subseteq \cA^0_\alpha$. It follows from
\eqref{equ:a_minus_alt} and $\prob(Y<0)=\prob(X<0)$ that $Y\notin \cA^-_\alpha$ and
$\prob(Y<0)\ge 1-\alpha$. Define $Z:=-\big(1/Y\big)\mathbf 1_{\{Y<0\}} + 	 \mathbf 1_{\{Y\ge 0\}}$. Then, $Z>0$ and $ZY = -\mathbf 1_{\{Y<0\}} + 	Y\mathbf 1_{\{Y\ge 0\}}\in \mathcal{L}^\infty(\Omega,\cF,\prob)\subseteq \mathcal{X}$. Moreover, for any $\epsilon\in(0,1)$, $\prob(ZY\le -\epsilon) = \prob(Y<0)\ge 1- \alpha$, which implies from \eqref{equ:a_zero_alt} that $ZY\notin \cA^0_\alpha$ and thus $ZY\notin \cA$. This contradicts the assumption that $\cA$ is num\'eraire-invariant.
\end{proof}

\begin{proof}[Proof of Theorem \ref{thm:closed}.]
To prove the ``if" part, by Proposition \ref{prop:VaRAcceptanceSets}, we only need to show that $\cA^+_{\alpha}$ is convergence-in-probability-closed. Suppose $X_k\in \cA^+_{\alpha}$, $k=1, 2, \ldots$ and $X_k\to X\in \mathcal{X}$ in probability as $k\to \infty$. Then, for any $\delta>0$, $\prob(X\leq -\delta)=\prob(X\leq -\delta, |X_k-X|\geq \delta/2)+\prob(X\leq -\delta, |X_k-X|<\delta/2)\leq \prob(|X_k-X|\geq \delta/2)+\prob(X_k\leq -\delta/2)\leq \prob(|X_k-X|\geq \delta/2)+1-\alpha$. Letting $k\to \infty$, we obtain $\prob(X\leq -\delta)\leq 1-\alpha$. Then, letting $\delta \downarrow 0$ yields $\prob(X < 0)\leq 1-\alpha$, which in combination with \eqref{equ:a_plus_alt} implies that $X\in \cA^+_{\alpha}$.

Next, we prove the ``only if" part. Recall part (i) of Theorem \ref{th:Main} and the convergence-in-probability-closedness of $\cA^+_{\alpha}$ in $\mathcal{X}$, $\alpha\in[0,1]$. Because $\cA$ is assumed to be nonempty and $\cA^-_1=\cA^0_1=\emptyset$, we only need to show that for any $\alpha\in[0,1)$, the convergence-in-probability-closure of $\cA^-_{\alpha}$ in $\mathcal{X}$ is $\cA^+_{\alpha}$, i.e., for any $X\in \cA^+_{\alpha}\setminus \cA^-_{\alpha}$, there exists a sequence $X_k$, $k=1, 2, \ldots$ such that $X_k\in \cA^-_{\alpha}$, $\forall k$ and $X_k\to X$ in probability as $k\to \infty$. For any $n=1,2,\dots$, denote $Y_n:=\min(n,\max(-n,X))\in \mathcal{L}^\infty(\Omega,\cF,\prob)\subseteq \mathcal{X}$. Because $\prob(Y_n<0)=\prob(X<0)$, we conclude from \eqref{equ:a_minus_alt} and \eqref{equ:a_plus_alt} that $Y_n \in\cA^+_{\alpha}\setminus \cA^-_{\alpha}$. In the following, we show that for each fixed $n$, we can find a sequence $Z^n_{k}$, $k=1, 2, \ldots$ in $\cA^-_{\alpha}$ such that it converges to $Y_n$ in probability as $k\to \infty$. Then, for any $k\in \mathbb{N}$,
there exists index $m_k$ such that $d(Z^k_{m_k}, Y_k)<1/k$, where $d(\cdot, \cdot)$ denotes the Ky Fan metric (i.e., $d(X,Y):=\expect[\min(|X-Y|, 1)]$, $\forall X, Y$). Define the sequence $X_k:=Z^k_{m_k}$, $k=1, 2, \ldots$. Then, $X_k\in \cA^-_{\alpha}$ and $X_k$ converges to $X$ in probability as $k\to \infty$.

For fixed $n$, because $Y_n \in\cA^+_{\alpha}\setminus \cA^-_{\alpha}$, we conclude from \eqref{equ:a_minus_alt} and \eqref{equ:a_plus_alt} that $\prob(Y_n<0) = 1-\alpha>0$. Denote $a:=\text{essinf}\, Y_n$; then $a\in [-n,0)$ because $\prob(Y_n<0)>0$ and $Y_n\ge -n$. There are two cases: (i) $\prob(Y_n=a)=0$ and (ii)  $\prob(Y_n=a)>0$. In case (i), define $Z^n_k:=Y_n\mathbf 1_{\{Y_n\geq a+\frac{1}{k}\}}\in \mathcal{L}^\infty(\Omega,\cF,\prob)\subseteq \mathcal{X}$, $\forall k$. By the definition of $a$, $\prob(Y_n< a+\frac{1}{k})>0$, $\forall k$. Hence, $\prob(Z^n_k < 0)=\prob(Y_n<0) - \prob(Y_n< a+\frac{1}{k})<\prob(Y_n<0)=1-\alpha$, $\forall k>-\frac{1}{a}$, which implies that $Z^n_k\in \cA^-_{\alpha}$, $\forall k>-\frac{1}{a}$. In addition, $\lim_{k\rightarrow +\infty}\prob(Y_n<a+1/k)=\prob(Y_n=a)=0$, implying that $Z^n_k\to Y_n$ a.s. and thus in probability as $k\to \infty$. In case (ii), by the intermediate value theorem for atomless measures (see, e.g., Proposition 215D on p.46 of \citet{Fremlin-2010}), there exists a sequence of decreasing subsets $\{Y_n=a\}\supset A_1 \supset A_2 \supset \cdots$, such that $\prob(A_k)=\frac{1}{k+1}\prob(Y_n=a)$, $\forall k$. Define $Z^n_k:=Y_n\mathbf 1_{\{\Omega\setminus A_k\}}\in \mathcal{L}^\infty(\Omega,\cF,\prob)\subseteq \mathcal{X}$. Then, $\prob(Z^n_k<0)=\prob(Y_n<0)-\prob(A_k)<1-\alpha$, which implies that $Z^n_k\in \cA^-_{\alpha}$, $\forall k$. In addition, for any $\epsilon>0$, $\prob(|Z^n_k-Y_n|\geq \epsilon)\leq \prob(A_k)\to 0$ as $k\to\infty$, which implies that $Z^n_k\to Y_n$ in probability as $k\to\infty$.
\end{proof}

\section{Topologies on $\mathcal{X}$}\label{appx:Topology}
In this section, we first show that Theorem \ref{thm:closed} still holds if we restrict $\mathcal{X}\subseteq\mathcal{L}^{p}(\Omega,\cF,\prob)$ for some $p\in[1,+\infty)$ and consider the closedness with respect to the $L^p$-norm.
\begin{corollary}\label{cor:closedLp}
	Assume $(\Omega, \cF,\prob)$ is atomless and $\mathcal{X}\subseteq\mathcal{L}^{p}(\Omega,\cF,\prob)$ for some $p\in[1,+\infty)$. Then, $\cA$ is a nonempty, surplus-invariant, law-invariant, conic (or num\'eraire-invariant), and $L^p$-closed (i.e., $\cA$ is closed in $\mathcal{X}$ under the $L^p$-norm) acceptance set if and only if there exists $\alpha\in[0,1]$ such that $\cA = \cA^+_\alpha$.
\end{corollary}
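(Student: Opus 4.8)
The plan is to mirror the proof of Theorem \ref{thm:closed}, replacing convergence in probability by $L^p$-convergence throughout and checking that the two places where the topology is used still go through. The whole argument rests on the elementary fact that, for $p\in[1,+\infty)$, convergence in $L^p$ implies convergence in probability, while the explicit approximating sequences built in the proof of Theorem \ref{thm:closed} are uniformly bounded and hence converge in $L^p$ as well by dominated convergence.

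For the ``if'' direction I would argue that $\cA^+_\alpha$ is $L^p$-closed: if $X_k\to X$ in $L^p$ then $X_k\to X$ in probability, so the convergence-in-probability-closedness of $\cA^+_\alpha$ established in the proof of Theorem \ref{thm:closed} immediately yields $X\in\cA^+_\alpha$. Combined with Proposition \ref{prop:VaRAcceptanceSets}, this shows every $\cA^+_\alpha$ has the required properties.

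For the ``only if'' direction, I first note that $L^p$-closedness implies truncation-closedness: for $X\in\mathcal{X}\subseteq\mathcal{L}^{p}(\Omega,\cF,\prob)$ the truncations $\min(\max(-d,X),d)$ are dominated by $|X|\in L^p$ and converge to $X$ a.s., hence in $L^p$ as $d\to\infty$. Thus Theorem \ref{th:Main}(i) applies (num\'eraire-invariance implies conicity, so conicity may be assumed) and gives $\alpha\in[0,1]$ with either $\cA=\cA^+_\alpha$, in which case we are done, or $\cA^-_\alpha\subseteq\cA\subseteq\cA^0_\alpha$. In the latter case, since $\cA\subseteq\cA^0_\alpha\subseteq\cA^+_\alpha$, since $\cA$ is $L^p$-closed and contains $\cA^-_\alpha$, and since (as shown below) the $L^p$-closure of $\cA^-_\alpha$ is all of $\cA^+_\alpha$, we get $\cA^+_\alpha=\overline{\cA^-_\alpha}\subseteq\cA\subseteq\cA^+_\alpha$, forcing $\cA=\cA^+_\alpha$. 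The case $\alpha=1$ is handled as in Theorem \ref{thm:closed} since $\cA^-_1=\cA^0_1=\emptyset$ and $\cA$ is nonempty.

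The remaining and only substantive step is to verify that $\cA^+_\alpha$ is the $L^p$-closure of $\cA^-_\alpha$ for $\alpha\in[0,1)$, i.e. that each $X\in\cA^+_\alpha\setminus\cA^-_\alpha$ is an $L^p$-limit of elements of $\cA^-_\alpha$. Here I would reuse the approximating sequences constructed in the proof of Theorem \ref{thm:closed}: the truncations $Y_n=\min(n,\max(-n,X))$ converge to $X$ in $L^p$ by dominated convergence, and for each fixed $n$ the bounded modifications $Z^n_k$ (namely $Y_n\mathbf 1_{\{Y_n\ge a+1/k\}}$ or $Y_n\mathbf 1_{\Omega\setminus A_k}$, depending on whether $a:=\text{essinf}\,Y_n$ is an atom) lie in $\cA^-_\alpha$, are uniformly bounded by $\|Y_n\|_\infty$, and converge to $Y_n$ a.s.; hence they converge to $Y_n$ in $L^p$, again by dominated convergence. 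A diagonal choice of $m_k$ with $\|Z^k_{m_k}-Y_k\|_{L^p}<1/k$ then produces $X_k:=Z^k_{m_k}\in\cA^-_\alpha$ with $X_k\to X$ in $L^p$. The main obstacle is exactly this upgrade from convergence in probability to $L^p$-convergence, and it is resolved by the uniform boundedness of the modifications together with the $L^p$-domination of the truncations by $|X|$, so no genuinely new estimate beyond dominated convergence is needed.
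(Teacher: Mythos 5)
Your proof is correct and takes essentially the same route as the paper's: reduce to (i) $L^p$-closedness of $\cA^+_\alpha$ via the fact that $L^p$-convergence implies convergence in probability, and (ii) identifying the $L^p$-closure of $\cA^-_\alpha$ with $\cA^+_\alpha$ by reusing the truncations $Y_n$ and the modifications $Z^n_k$ from the proof of Theorem \ref{thm:closed}, upgraded to $L^p$-convergence by boundedness and dominated convergence. The paper compresses this upgrade into the phrase ``careful examination shows''; you have simply spelled out the details (including the observation that $L^p$-closedness implies truncation-closedness), so no gap remains.
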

\begin{proof}
Following the same steps in the proof of Theorem \ref{thm:closed}, we only need to show that (i) for any $\alpha\in[0,1]$, $\cA^+_\alpha$ is closed in $\mathcal{X}$ under the $L^p$-norm and (ii) for any $\alpha\in[0,1)$, the closure of $\cA^-_\alpha$ in $\mathcal{X}$ under the $L^p$-norm is $\cA^+_\alpha$. Claim (i) is true because we already showed in the proof of Theorem \ref{thm:closed} that $\cA^+_\alpha$ is closed in $\mathcal{X}$ under the convergence in probability, and the topology induced by the $L^p$-norm is stronger than that induced by the convergence in probability. To show claim (ii), recall the sequence of random variables $Z^n_k$ in $\cA^-_\alpha$ for large enough $k$ that are constructed in the proof of Theorem \ref{thm:closed} to converge in probability to a given bounded random variable $Y_n$ in $\cA^+_\alpha$. Careful examination shows that $Z^n_k$ converges,  as $k\rightarrow +\infty$, to $Y_n$ in $L^p$-norm as well. Thus, claim (ii) is true.
\end{proof}

Theorem \ref{thm:closed} does not hold if we consider $\mathcal{X}=\mathcal{L}^{\infty}(\Omega,\cF,\prob)$ and the closedness with respect to the $L^\infty$-norm because the closure of $\cA^-_\alpha$ under this norm is a strict subset of $\cA^+_\alpha$ in general. Indeed, similar to the proof of Theorem \ref{thm:closed}, we can show that $\cA^+_\alpha$ is closed in $\mathcal{X}$ under the $L^\infty$-norm. Now, fix $\alpha\in [0,1)$ and consider $X\in \cA^+_\alpha$ such that $\prob(X=1)=\alpha$ and $\prob(X=-1)=1-\alpha$. For any $Y\in \cA^-_\alpha$, we have $\prob(Y<0)<1-\alpha$, so $\prob(Y\ge 0,X=-1)\ge \prob(X=-1)-\prob(Y<0)>0$. Consequently, the $L^\infty$-norm of $X-Y$ must be larger than or equal to 1, implying that $X$ is not in the closure of $\cA^-_\alpha$ under the $L^\infty$-norm.

Theorem \ref{thm:closed} does not hold either if we consider $\mathcal{X}=\mathcal{L}^{\infty}(\Omega,\cF,\prob)$ and the closedness with respect to the weak-star topology because $\cA^+_\alpha$ is not closed under this topology in general. Indeed, consider $\Omega=[0,1]$, $\cF$ to be the Borel $\sigma$-algebra, and $\prob$ to be the Lebesgue measure. We show that for any $\alpha\in(0,1)$, $\cA^+_\alpha$ is not closed under the weak-star topology. For any $X\in \mathcal{L}^{\infty}(\Omega,\cF,\prob)$, any $\epsilon>0$, and any finite subset $A$ of $\mathcal{L}^{1}(\Omega,\cF,\prob)$, denote $N(X,A,\epsilon):=\{Y\in \mathcal{L}^{\infty}(\Omega,\cF,\prob)\mid |\expect[YZ]-\expect[XZ]|<\epsilon, \forall Z\in A\}$. Then, $\{N(X,A,\epsilon)\mid X\in \mathcal{L}^{\infty}(\Omega,\cF,\prob), A\text{ is a finite subset of }\mathcal{L}^{1}(\Omega,\cF,\prob),\epsilon>0\}$ form a base of the weak-star topology; see the Definition V.3.2 in \citet{DunfordNSchwartzJ:88linearoperators}. Now, consider $X^*\equiv -(1-\alpha) \in \mathcal{X}\backslash \cA^+_\alpha$. In the following, we show that for any finite subset $A$ of $\mathcal{L}^{1}(\Omega,\cF,\prob)$ and any $\epsilon>0$, there exists $X'\in  N(X^*,A,\epsilon)\cap \cA^+_\alpha$; consequently, $\cA^+_\alpha$ is not closed under the weak-star topology.

For each $i\in\{1,\dots,n\}$, because $Z_i\in \mathcal{L}^{1}(\Omega,\cF,\prob)$ and the set of continuous functions is dense in $\mathcal{L}^{1}(\Omega,\cF,\prob)$, there exists a continuous function $\tilde Z_i(t),t\in[0,1]$ such that $\expect[|Z_i-\tilde Z_i|]<\epsilon/3$. Because $\tilde Z_i$'s are uniformly continuous, there exists an integer $m\ge 1$ such that $|\tilde Z_i(t)-\tilde Z_i(s)|<\epsilon/ (6\alpha(1-\alpha))$ for any $i\in\{1,\dots,n\}$ and any $s,t\in[0,1]$ such that $|s-t|\le 1/m$. Now, define
{\allowdisplaybreaks
\begin{align*}
X_m(t)=\begin{cases}
-1,& t\in\left[\frac{k-1}{m},\frac{k-1}{m}+\frac{1-\alpha}{m}\right ),\\
0, & t\in\left[\frac{k-1}{m}+\frac{1-\alpha}{m},\frac{k}{m}\right ),
\end{cases}
\quad k=1,\dots,m.
\end{align*}}%
Then, $\prob(X_m<0) = \prob(X_m=-1)=1-\alpha$, so $X_m\in \cA^+_\alpha$. We have
{\allowdisplaybreaks
\begin{align*}
&|\expect[\tilde Z_iX_m]-\expect[\tilde Z_iX^*]| = \left|\int_0^1 \tilde Z_i(t)X_m(t)dt - \int_0^1 \tilde Z_i(t)X^*(t)dt\right| \\
={}& \left|\sum_{k=1}^m\left[-\int_{\frac{k-1}{m}}^{\frac{k-1}{m}+\frac{1-\alpha}{m}} \tilde Z_i(t)dt  \right]+(1-\alpha)\int_0^1 \tilde Z_i(t)dt\right|\\
={}& \left|\sum_{k=1}^m\left[-\alpha \int_{\frac{k-1}{m}}^{\frac{k-1}{m}+\frac{1-\alpha}{m}} \tilde Z_i(t)dt  + (1-\alpha) \int_{\frac{k-1}{m}+\frac{1-\alpha}{m}}^{\frac{k}{m}} \tilde Z_i(t)dt\right]\right|\\
={}&\left|\sum_{k=1}^m\left[-\alpha \int_{\frac{k-1}{m}}^{\frac{k-1}{m}+\frac{1-\alpha}{m}} \left( \tilde Z_i(t)-\tilde Z_i\left(\frac{k-1}{m}+\frac{1-\alpha}{m}\right)\right)dt  \right.\right.\\
&\quad\quad\quad\left.\left.+ (1-\alpha) \int_{\frac{k-1}{m}+\frac{1-\alpha}{m}}^{\frac{k}{m}}\left( \tilde Z_i(t)-\tilde Z_i\left(\frac{k-1}{m}+\frac{1-\alpha}{m}\right)\right)dt\right]\right|\\
\le{}&  \sum_{k=1}^m\left[\alpha \int_{\frac{k-1}{m}}^{\frac{k-1}{m}+\frac{1-\alpha}{m}} \left|\tilde Z_i(t)-\tilde Z_i\left(\frac{k-1}{m}+\frac{1-\alpha}{m}\right)\right|dt \right.\\
&\quad\quad\quad\left. + (1-\alpha) \int_{\frac{k-1}{m}+\frac{1-\alpha}{m}}^{\frac{k}{m}}\left|\tilde Z_i(t)-\tilde Z_i\left(\frac{k-1}{m}+\frac{1-\alpha}{m}\right)\right|dt\right]\\
<{}& \epsilon/3,
\end{align*}}%
where the last inequality holds because $|\tilde Z_i(t)-\tilde Z_i(s)|<\epsilon/(6\alpha(1-\alpha))$ for any $s,t\in[0,1]$ such that $|s-t|<1/m$. As a result,
{\allowdisplaybreaks
\begin{align*}
|\expect[ Z_iX_m]-\expect[ Z_iX^*]|  \le & {} |\expect[\tilde Z_iX_m]-\expect[\tilde Z_iX^*]|  +  |\expect[\tilde Z_iX_m]-\expect[ Z_iX_m]|\\
&{} + |\expect[\tilde Z_iX^*]-\expect[ Z_iX^*]| \\
< &{} \epsilon/3 +  2\expect[|\tilde Z_i -Z_i|]  <\epsilon,
\end{align*}}%
where the second inequality holds because $|X_m|\le 1$ and $|X^*|\le 1$, and the third inequality is the case because $\expect[|\tilde Z_i -Z_i|] <\epsilon/3$.  Thus, $X_m\in  N(X^*,A,\epsilon)\subseteq N(X,A,\epsilon)$. On the other hand, $X_m\in \cA^+_\alpha$. Therefore, $\cA^+_\alpha$ is not closed under the weak-star topology.

The above example is due to Marcel Nutz in a private communication between him and the first author of the present paper.

\section{The Case of Equi-probable Probability Spaces}\label{appx:EqualProbable}
\begin{theorem}\label{coro:EquiProbable}
	Suppose $(\Omega,\cF,\prob)$ is an equi-probable probability space. Then, $\cA$ is a surplus-invariant, law-invariant, and conic (or num\'eraire-invariant) acceptance set if and only if it is the empty set or $\cA^+_\alpha$ for some $\alpha\in[0,1]$.
\end{theorem}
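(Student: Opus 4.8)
The plan is to mirror the structure of Theorem \ref{th:Main}, but to exploit the fact that on an equi-probable space the whole analysis becomes finite-dimensional, which both removes the need for any closedness hypothesis and collapses the three VaR-induced sets down to two. The ``if'' direction is immediate: the empty set trivially has all the properties, and $\cA^+_\alpha$ is surplus-invariant, law-invariant, conic, and num\'eraire-invariant by Proposition \ref{prop:VaRAcceptanceSets}(ii), whose proof uses only the defining inequalities and is therefore valid verbatim here. Since num\'eraire-invariance implies conicity, for the ``only if'' direction it suffices to treat a nonempty, surplus-invariant, law-invariant, conic $\cA$ and show $\cA=\cA^+_\alpha$ for some $\alpha$. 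Two features of the equi-probable setting drive the argument: first, since every random variable is bounded, $\mathcal{X}=\mathcal{L}^\infty(\Omega,\cF,\prob)=\mathcal{L}^0(\Omega,\cF,\prob)$, so truncation-closedness holds vacuously and no topological assumption is required; second, for every $X$ we have $\prob(X<0)=|\{i:X(\omega_i)<0\}|/n$, so $\prob(X<0)$ ranges over the finite set $\{0,1/n,\dots,1\}$, and law-invariance is exactly invariance under permutations of the $n$ atoms.

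Next I would set $1-\alpha:=\sup_{X\in\cA}\prob(X<0)$. Because this supremum is taken over a finite set of values, it is attained: there is $X_0\in\cA$ with $\prob(X_0<0)=1-\alpha=j/n$ for some integer $j$, i.e. with exactly $j$ negative coordinates. The inclusion $\cA\subseteq\cA^+_\alpha$ is then immediate from the definition of $\alpha$ together with the characterization \eqref{equ:a_plus_alt}.

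The crux is the reverse inclusion $\cA^+_\alpha\subseteq\cA$. Fix $Z\in\cA^+_\alpha$, so $Z$ has $k\le j$ negative coordinates. Using law-invariance (permutation of atoms) I would relabel so that $X_0$ is negative exactly on atoms $1,\dots,j$ and produce $\tilde Z$, equal in distribution to $Z$, whose $k$ negative atoms sit inside $\{1,\dots,j\}$ and whose remaining atoms are nonnegative. Writing $\mu:=\min_{1\le i\le j}|X_0(\omega_i)|>0$ and choosing $\epsilon>0$ small enough that $\epsilon\max_i|Z(\omega_i)|\le\mu$, one checks coordinatewise that $(\epsilon\tilde Z)^-\le X_0^-$ a.s.: on the negative atoms of $\tilde Z$ the bound reads $\epsilon|\tilde Z|\le\mu\le|X_0|$, and elsewhere $(\epsilon\tilde Z)^-=0$. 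Surplus-invariance then gives $\epsilon\tilde Z\in\cA$, conicity gives $\tilde Z\in\cA$, and law-invariance gives $Z\in\cA$. Hence $\cA=\cA^+_\alpha$. The case $\alpha=1$, i.e. $j=0$, is even simpler: then $X_0\ge0$ and every $Z\in\cA^+_1$ satisfies $Z^-=0=X_0^-$, so surplus-invariance applies directly with no scaling.

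I expect the main point, rather than a genuine obstacle, to be the attainment of the supremum. In the atomless Theorem \ref{th:Main} the value $1-\alpha$ need not be attained, which is precisely what forces the dichotomy $\cA^-_\alpha\subseteq\cA\subseteq\cA^0_\alpha$ versus $\cA=\cA^+_\alpha$; here attainment at $j/n$ places $X_0$ outside $\cA^0_\alpha$ (indeed on an equi-probable space one checks $\cA^-_\alpha=\cA^0_\alpha$, since $\prob(X\le-\epsilon)=\prob(X<0)$ for all small $\epsilon>0$), so we always land in the $\cA^+_\alpha$ alternative. The only genuinely technical step is the explicit permutation matching the negative atoms of $Z$ into those of $X_0$, but this is routine finite combinatorics once $k\le j$ is in hand.
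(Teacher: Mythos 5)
Your proof is correct, but it takes a genuinely different route from the paper's. The paper proves this theorem by re-running the proof of Theorem \ref{th:Main}(i) on the equi-probable space: it constructs discrete analogues $U$ and $\tilde U$ of the uniform variables needed for the quantile representations $X=F_X^{-1}(U+)=F_X^{-1}(\tilde U)$, notes that truncation-closedness is automatic, obtains the dichotomy (either $\cA^-_\alpha\subseteq\cA\subseteq\cA^0_\alpha$ or $\cA=\cA^+_\alpha$), and then collapses the first alternative by showing that on a finite equi-probable space $\cA^-_\alpha=\cA^0_\alpha=\cA^+_{\alpha'}$ for the explicit shifted level $\alpha'=1-\big(\lceil n(1-\alpha)\rceil-1\big)/n$. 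You instead give a self-contained argument that bypasses quantile functions entirely: because $\prob(X<0)$ ranges over the finite set $\{0,1/n,\dots,1\}$, the supremum defining $1-\alpha$ is attained at some $X_0\in\cA$ with exactly $j$ negative atoms, and then permutation invariance (which is exactly what law-invariance means here), a single small scaling $\epsilon$, surplus-invariance, and conicity yield $\cA^+_\alpha\subseteq\cA$ directly; your separate treatment of the degenerate case $j=0$ and the coordinatewise verification of $(\epsilon\tilde Z)^-\le X_0^-$ are both sound. Your route is more elementary and makes transparent the structural reason the finite case has no dichotomy (attainment of the supremum, which fails in the atomless setting and is what forces the $\cA^-_\alpha\subseteq\cA\subseteq\cA^0_\alpha$ branch there), whereas the paper's route buys reuse of the main theorem's argument without duplicating it and, as a by-product, the identity $\cA^-_\alpha=\cA^0_\alpha=\cA^+_{\alpha'}$ with an explicit formula for $\alpha'$, which your proof does not produce (though you observe $\cA^-_\alpha=\cA^0_\alpha$ in passing).
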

\begin{proof}[Proof of Theorem \ref{coro:EquiProbable}.]
Suppose $\Omega=\{\omega_1, \ldots, \omega_n\}$. Then, by definition, $\prob(\{\omega_i\})=1/n, i=1, \ldots, n$. First, we show that for any random variable $X$ on $(\Omega,\cF,\prob)$, there exist random variables $U$ and $\tilde U$ on $(\Omega,\cF,\prob)$ such that (i)  $X=F_X^{-1}(U+)=F_X^{-1}(\tilde U)$ and (ii) any random variable $Y$ on $(\Omega,\cF,\prob)$ has the same distribution as $F_{Y}^{-1}(U+)$ and $F_{Y}^{-1}(\tilde U)$. In fact, without loss of generality (by relabeling the states if necessary), suppose that $X(\omega_i)=x_k, n_{k-1}<i\leq n_k, k=1,\ldots, m$,
where $n_0=0<n_1<\cdots<n_m=n$, and $x_1 < x_2 < \cdots < x_m$. Then, we define two random variables $\tilde U$ and $U$ as follows:
$\tilde U(\omega_i) := i/n, U(\omega_i) := (i-1)/n, i=1, \ldots, n$.
Then, we have $F_X^{-1}(\tilde U(\omega_i))=F_X^{-1}(i/n)=x_k=X(\omega_i)$, for any $n_{k-1}<i\leq n_k$; hence, $F_X^{-1}(\tilde U)=X$. In addition,
$F_X^{-1}(U(\omega_i)+)=F_X^{-1}(((i-1)/n)+)=x_k=X(\omega_i)$, for any $n_{k-1}<i\leq n_k$; hence, $F_X^{-1}(U+)=X$. For any random variable $Y$ on $(\Omega,\cF,\prob)$, suppose that $\{Y(\omega)\mid \omega\in\Omega\}=\{y_1, \ldots, y_{\bar m}\}$, where $y_1 < \cdots < y_{\bar m}$, and $\prob(Y\leq y_k)=\bar n_k/n$, $k=1,\ldots, \bar m$, where $\bar n_0=0<\bar n_1 < \cdots < \bar n_{\bar m} = n$. Define a random variable $\bar Y$ as $\bar Y(\omega_i)=y_k$, for any $\bar n_{k-1}<i\leq \bar n_k$. Then, we have $F_Y^{-1}(\tilde U(\omega_i))=F_Y^{-1}(i/n)=y_k=\bar Y(\omega_i)$, for any $\bar n_{k-1}<i\leq \bar n_k$; hence, $F_Y^{-1}(\tilde U)=\bar Y$. In addition,
$F_Y^{-1}(U(\omega_i)+)=F_Y^{-1}(((i-1)/n)+)=y_k=\bar Y(\omega_i)$, for any $\bar n_{k-1}<i\leq \bar n_k$; hence, $F_Y^{-1}(U+)=\bar Y$. Since $\bar Y$ has the same distribution as $Y$, it follows that (ii) holds.

Because $\Omega$ is a finite set, all random variables on $(\Omega,\cF,\prob)$ are bounded and thus form the space of capital positions $\mathcal{X}$; in consequence, any acceptance set is truncation-closed. Then, using $U$ and $\tilde U$ as constructed above, {it can be easily varified that} the proof for part (i) of Theorem \ref{th:Main} can also go through for the case of equi-probable probability spaces; therefore, if $\cA$ is a surplus-invariant, law-invariant, and conic acceptance set, then there exists $\alpha\in[0,1]$ such that $\cA^-_\alpha\subseteq \cA\subseteq \cA^0_\alpha$ or $\cA=\cA^+_\alpha$. When such $\alpha=1$, we have $\cA^-_1=\cA^0_1=\emptyset$, so $\cA=\emptyset$ or $\cA = \cA^+_1$ in this case. When such $\alpha\in[0,1)$, we claim that $\cA^-_\alpha= \cA^0_\alpha = \cA^+_{\alpha'}$ for some $\alpha'\in(\alpha,1]$. Indeed, for any $X\in \cA_\alpha^0$, there are two cases: (i) $X\geq 0$. In this case, by definition of $\cA^-_\alpha$, $X\in \cA^-_\alpha$. (ii) there exists $\omega$ such that $X(\omega)<0$. Let $x^*:=\max\{X(\omega)\mid X(\omega) < 0, \omega\in \Omega\}$. Note that $\Omega$ is a finite set, so $x^*<0$. Then, it follows from
\eqref{equ:a_zero_alt} that $\prob(X \leq x^*) < 1-\alpha$, which implies that $\prob(X < 0) = \prob(X \leq x^*) < 1-\alpha$. By \eqref{equ:a_minus_alt}, we conclude $X\in \cA_\alpha^-$. Hence, $\cA_\alpha^-=\cA_\alpha^0$. Now, noting that the probability of any event $A$ on $(\Omega,\cF,\prob)$ is a multiple of $1/n$, $\prob(A)<1-\alpha$ if and only if $\prob(A)\le 1-\alpha'$, where $\alpha':=1-\big(\lceil n(1-\alpha)\rceil-1\big)/n$ and $\lceil x\rceil$ stands for the ceiling of $x$ (i.e., the smallest integer dominating $x$); moreover, because $\lceil n(1-\alpha)\rceil-1<n(1-\alpha)$ and $\alpha<1$, we conclude that $\alpha'\in (\alpha,1]$. In consequence, by \eqref{equ:a_minus_alt} and \eqref{equ:a_plus_alt}, $X\in \cA^-_\alpha$ if and only $X\in \cA^+_{\alpha'}$; i.e., $\cA^-_\alpha=\cA^+_{\alpha'}$.

Finally, combining the above discussion for the case $\alpha=1$ and for the case $\alpha\in[0,1)$, we conclude that $\cA$ is a surplus-invariant, law-invariant, and conic acceptance set if and only if it is the empty set or $\cA^+_\alpha$ for some $\alpha\in[0,1]$. The above argument still holds when conicity is replaced by num\'eraire-invariance.
\end{proof}

\section{Two Examples}\label{appx:Examples}
\begin{example}\label{ex:VaRNotNI}
	Suppose $(\Omega, \cF,\prob)$ is atomless and let $\mathcal{X}=\mathcal{L}^{\infty}(\Omega,\cF,\prob)$. Consider a uniformly distributed random variable $Y$ with support $(-1+\alpha,\alpha)$. It follows from \eqref{equ:a_minus_alt} and \eqref{equ:a_zero_alt} that $Y\in \cA^0_\alpha$ and $Y\notin \cA^-_\alpha$. Define $Z:=-\big(1/Y)\mathbf 1_{\{Y<0\}} + \mathbf 1_{\{Y\ge 0\}}$. Then, $Z>0$ and $ZY = -\mathbf 1_{\{Y<0\}} + Y\mathbf 1_{\{Y\ge 0\}}\in \mathcal{X}$. Moreover, for any $\epsilon\in (0,1)$, $\prob(ZY\le -\epsilon)=\prob(Y<0)=1-\alpha$, which in combination with \eqref{equ:a_zero_alt} implies that $ZY\notin \mathcal{A}^0_\alpha$. Therefore, $\mathcal{A}^0_\alpha$ is not num\'eraire-invariant.
\end{example}
\begin{example}\label{ex:SINotVaR}
	Suppose $(\Omega, \cF,\prob)$ is atomless and let $\mathcal{X}=\mathcal{L}^{\infty}(\Omega,\cF,\prob)$. Fix $\alpha\in(0,1)$ and consider $Y\in\mathcal{X}$ such that $Y$ has a uniform distribution on $(-1+\alpha, \alpha)$. Then,
	$F^{-1}_{Y}(z)=z-(1-\alpha)$, $z\in (0,1)$. Define $\mathcal{X}_1:=\big\{X\in\mathcal{X}\mid \exists\, \delta > 0\text{ such that }\linebreak\delta \min\big(F^{-1}_X(z),0\big)\ge \min\big(F^{-1}_Y(z),0\big), \forall\, z\in(0,1)\big\}$ and $\mathcal{X}_2:=\mathcal{X}\backslash \mathcal{X}_1$.
	One can see that $\mathcal{X}_i$'s are law-invariant and conic; consequently, so is $\cA:=\big(\mathcal{X}_1\cap \cA_{\alpha}^{0}\big) \cup\big(\mathcal{X}_2\cap\cA_{\alpha}^{-}\big)$.
	First, 
	for any random variable $X$, $\min(F_{X}^{-1}(z),0)=F_{-X^-}^{-1}(z),z\in(0,1)$ because $F_{\varphi(X)}^{-1}(z)=\varphi\big(F_X^{-1}(z)\big),z\in(0,1)$ for any continuous and increasing function $\varphi$ (see endnote \ref{footnote: argument_transform_quantile}).
	Second, it is obvious that $\cA^{-}_\alpha\subseteq \cA\subseteq \cA^{0}_\alpha$. Third, we will show that $\cA$ is surplus-invariant. Consider any $X_1\in \cA$ and $X_2\in \mathcal{X}$ such that $X_2^-\leq X_1^-$ a.s. There are two cases: (1) If $X_1\in \mathcal{X}_1$, then $X_1\in \cA_\alpha^0$. 
	Because $X_2^-\leq X_1^-$ a.s.
	and $\cA^0_\alpha$ is surplus-invariant, we conclude that $X_2\in \cA_\alpha^0$.
	Furthermore, $\min(F_{X_1}^{-1}(z),0)=F_{-X_1^-}^{-1}(z)\le F_{-X_2^-}^{-1}(z)=\min(F_{X_2}^{-1}(z),0),\forall z\in (0,1)$ because $X_2^-\le X_1^-$ a.s. Consequently, $X_2 \in \mathcal{X}_1$ and thus $X_2\in \cA$.
	(2) If $X_1\in \cX_2$, then $X_1\in \cA_\alpha^-$.
	Because  $X_2^-\le X_1^-$ a.s. and $\cA^-_\alpha$ is surplus-invariant, we conclude that $X_2\in \cA_\alpha^-\subseteq \cA$. Fourth, as $\cX=\mathcal{L}^{\infty}(\Omega,\cF,\prob)$, the acceptance set $\cA$ is truncation-closed.
	Finally, one can see that $Y\notin \cA_\alpha^{-}$ and $Y\in \mathcal{X}_1\cap \cA^0_\alpha\subseteq \cA$. Moreover, $\tilde Y$ with quantile function $F^{-1}_{\tilde Y}(z)=z-(1-\alpha)$, $z\ge 1-\alpha$ and $F^{-1}_{\tilde Y}(z)=-\sqrt{1-\alpha-z}$, $z<1-\alpha$ falls neither in $\mathcal{X}_1$ nor in $\cA^{-}_\alpha$, so $\tilde Y\notin \cA$; however, $\tilde Y\in \cA_{\alpha}^{0}$. To summarize, $\cA$ is a surplus-invariant, law-invariant, conic, and truncation-closed acceptance set and $\cA^{-}_\alpha\subsetneq \cA\subsetneq \cA^{0}_\alpha$.
\end{example}

\bibliographystyle{dcu}
\bibliography{ShortTitles,BibFile}

\end{document}